\documentclass[conference,letterpaper]{IEEEtran}
\IEEEoverridecommandlockouts
\usepackage[utf8]{inputenc} 
\usepackage[T1]{fontenc}
\usepackage{url}              % provides \url{...}
\usepackage{cite}             % improves presentation of citations

\usepackage[cmex10]{amsmath}  % Use the [cmex10] option to ensure complicance
\usepackage{mleftright}       % fix to wrong spacing of \left-,
\mleftright                   % \middle- \right-commands 

\usepackage{graphicx}         % provides \includegraphics{...} to
\usepackage{booktabs} 
\usepackage[symbol]{footmisc}

\usepackage{algorithm} %Cemre's addition

\renewcommand{\thefootnote}{\fnsymbol{footnote}}
\newcommand{\authorlistandfunding}[1]{%
\begingroup%
\let\thefootnote\relax%
\footnotetext{#1}%
\endgroup%
}

\usepackage{amsmath,amsthm,amssymb}
\usepackage{enumerate}
\usepackage{color}
\usepackage{xcolor}
\usepackage{url}
\usepackage{balance}
\usepackage{graphicx} 
\usepackage{mathrsfs}
\usepackage{epsfig}
\usepackage{verbatim}
\usepackage{setspace}
\usepackage{cite}
\usepackage{bbm}

\usepackage{multicol}
\usepackage{lipsum}
\usepackage{etoolbox}
\usepackage{algpseudocode}
\usepackage{algorithm}
\usepackage{pgfplots}
  \pgfplotsset{compat=newest}
  \usetikzlibrary{plotmarks}
  \usetikzlibrary{arrows.meta}
  \usepgfplotslibrary{patchplots}
  \usepackage{grffile}
  \usepackage{amsmath}

\newcommand*{\rom}[1]{\expandafter\@slowromancap\romannumeral #1@}

\newtheorem{theorem}{Theorem}%[section]
\newtheorem{lemma}{Lemma}
\newtheorem{proposition}{Proposition}
\newtheorem{corollary}{Corollary}
\newtheorem{assumption}{Assumption}
\newtheorem{definition}{Definition}

\newtheorem{remark}{Remark}

\newcommand{\bbP}{\mathbb{P}}
\newcommand{\bbR}{\mathbb{R}}

\newcommand{\calD}{\mathcal{D}}

\newcommand{\calH}{\mathcal{H}}

\newcommand{\calS}{\mathcal{S}}

\newcommand{\calY}{\mathcal{Y}}

\newcommand{\hatcalH}{\hat{\calH}}

\newcommand{\removed}[1]{}


\newcommand{\mc}[1]{\mathcal{#1}}
\newcommand{\eps}{\varepsilon}

\newcommand{\R}{\mathbb{R}}

\newcommand{\br}[1]{\left(#1\right)}
\newcommand{\sbr}[1]{\left[#1\right]}
\newcommand{\cbr}[1]{\left\{#1\right\}}
\newcommand{\abs}[1]{\left|#1\right|}
\newcommand{\imag}[1]{\text{Image}\br{#1}}
\newcommand{\graph}[1]{\text{graph}\br{#1}}
\DeclareMathOperator*{\di}{\mathrm{d}\!}

\newcommand{\boldeps}{\boldsymbol{\eps}}
\newcommand{\bolddelta}{\boldsymbol{\delta}}

\newcommand{\regtradeoff}[1]{\mathcal{\uppercase{#1}}}
\newcommand{\regtradeoffindex}[2]{\regtradeoff{#1}_{#2}}

\newcommand{\regdp}[2]{\mc{R}\br{#1,#2}}
\newcommand{\regcomposedp}[3]{\mc{R}_{#3}\br{#1,#2}}

\newcommand{\regheterdp}[3]{\mc{C}_{#3}\br{#1,#2}}
\newcommand{\regheterdpsingleparam}[2]{\mc{C}_{#2}\br{#1}}

\usepackage{mathtools, stmaryrd}
\usepackage{xparse} 
\DeclarePairedDelimiterX{\Iintv}[1]{\llbracket}{\rrbracket}{\iintvargs{#1}}
\NewDocumentCommand{\iintvargs}{>{\SplitArgument{1}{,}}m}
{\iintvargsaux#1} %
\NewDocumentCommand{\iintvargsaux}{mm} {#1\mkern1.5mu,\mkern1.5mu#2}

\title{Composition Theorems for Multiple \\ Differential Privacy Constraints} 

\author{%
 \IEEEauthorblockN{Cemre Cadir, Salim Najib, and Yanina Y. Shkel}
                   }

\newcommand*{\extended}{}
\begin{document}
\maketitle

\authorlistandfunding{Alphabetical author order. Cemre Cadir, Salim Najib, and Yanina Y. Shkel are with the School of Computer \& Communication Sciences, École Polytechnique Fédérale de Lausanne (EPFL), Switzerland. Email: cemre.cadir@epfl.ch, salim.najib@epfl.ch, yanina.shkel@epfl.ch. This work is funded by the Swiss NSF grant number 211337.}

%%%%%%
%% Abstract: 
%% If your paper is eligible for the student paper award, please add
%% the comment "THIS PAPER IS ELIGIBLE FOR THE STUDENT PAPER
%% AWARD." as a first line in the abstract. 
%% For the final version of the accepted paper, please do not forget
%% to remove this comment!
%%

\begin{abstract}
The exact composition of mechanisms for which two differential privacy (DP) constraints hold simultaneously is studied. The resulting privacy region admits an exact representation as a mixture over compositions of mechanisms of heterogeneous DP guarantees, yielding a framework that naturally generalizes to the composition of mechanisms for which any number of DP constraints hold. This result is shown through a structural lemma for mixtures of binary hypothesis tests. Lastly, the developed methodology is applied to approximate $f$-DP composition.
\end{abstract}
%\begin{IEEEkeywords}
%differential privacy, composition theorem, heterogeneous composition, hypothesis testing, infimal convolution, mixture of hypothesis tests.
%\end{IEEEkeywords}
\section{Introduction}\label{sec:introduction}

In this work, we present new composition theorems for differentially private (DP) mechanisms \cite{dwork2006calibrating,dwork2006our} and apply them to approximate the composition of  $f$-differentially private ($f$-DP) mechanisms \cite{fdp}. Our work leverages the hypothesis testing perspective that has been shown to be especially useful for proving composition theorems for privacy preserving mechanisms \cite{wasserman2010statistical,compositiontheorem,dptv}.

\subsection{Differential Privacy and f-DP}
Differential privacy \cite{dwork2006calibrating,dwork2006our} is a widely studied worst-case privacy measure \cite{dwork2008differential,wasserman2010statistical,duchi2013local,kairouz2014extremal,bassily2015local,murtagh2015complexity,abadi2016deep,bun2016concentrated,compositiontheorem,mironov2017renyi,koskela2020computing,asoodeh2020better,liu2021machine,gopi2021numerical,koskela2021computing,zhu2022optimal,fdp,ldpsurvey,dptv,guerra2024composition}. It provides strong guarantees that limit the ability to distinguish between two neighboring databases: that is, two databases that differ in a single record.
\begin{definition}[Differential Privacy~\cite{dwork2006calibrating,dwork2006our}] \label{def:dp}
A mechanism $M : \calD \to \calY$ is $(\eps, \delta)$-differentially private (or $(\eps, \delta)$-DP), if for all pairs of neighboring databases $D_0, D_1 \in \calD$ and any subset $\calS \subseteq \calY$, we have
\begin{align}
\bbP[M(D_0) \in \calS] \leq e^\eps \bbP[M(D_1) \in \calS] + \delta \label{eq:dp_def}.
\end{align}
The special case with $\delta = 0$ is called pure $\eps$-DP.
\end{definition}

%In the literature, there are many variants of differential privacy \cite{bassily2015local,mironov2017renyi,fdp}. $f$-differential privacy \cite{fdp} is a variant motivated by hypothesis testing point of view of DP \cite{wasserman2010statistical,compositiontheorem}. 
Definition \ref{def:dp} could be alternatively formulated as a constraint on the trade-off function of the hypothesis test employed by the adversary \cite{wasserman2010statistical,compositiontheorem}. Motivated by this, \cite{fdp} introduced $f$-DP. In $f$-DP, trade-off functions $f$ have infinite degrees of freedom and can precisely describe the privacy of a mechanism. This is in contrast with $(\eps,\delta)$-DP where the privacy level is expressed with two parameters. One interesting feature of the $f$-DP framework is the equivalence of a symmetric trade-off function $f$ and a (potentially infinite) collection of $(\eps,\delta)$-DP guarantees. This equivalence suggests that an $f$-DP guarantee can be approximated by multiple DP constraints and motivates the study of such settings in this paper. By using multiple DP constraints, we can approach an arbitrary $f$-DP constraint and approximate its $k$-fold composition.

\subsection{Composition Theorems}

In most differentially private applications, the same database is reused for multiple queries as opposed to just a single query. %Formally, the \textit{composition} of mechanisms $M_1, \ldots, M_k$ is the mechanism $M = (M_1, \ldots, M_k)$. Composition is said to be adaptive when mechanisms are allowed to depend on the previous outputs and non-adaptive otherwise. 
Privacy degrades with composition and it is crucial to quantify the loss closely. The composition theorems characterize this privacy loss. Dwork et al. \cite{dwork2006our} provide an initial bound on $k$ compositions of $(\eps,\delta)$-DP and show that it is $(k\eps,k\delta)$-DP. Since then, there has been a large body of work on composition theorems for differential privacy \cite{dwork2006our,dwork2006calibrating,dwork2008differential,murtagh2015complexity,abadi2016deep,bun2016concentrated,compositiontheorem,fdp,koskela2020computing,asoodeh2020better,gopi2021numerical,koskela2021computing,zhu2022optimal,dptv,guerra2024composition}. Some recent works have taken advantage of the hypothesis testing point of view of DP \cite{wasserman2010statistical, compositiontheorem,murtagh2015complexity,dptv}. Kairouz et al. \cite{compositiontheorem} compute the exact $k$-fold composition region for a single $(\eps,\delta)$-DP constraint in the adaptive setting. Ghazi et al. \cite{dptv} extend this result by taking total variation privacy into account. This is equivalent to having simultaneous $(\eps,\delta)$-DP and $(0,\eta)$-DP constraints on all $k$ mechanisms. This approach achieves tighter composition results compared to \cite{compositiontheorem} whenever the mechanisms attain a non-trivial total variation constraint. Murtagh et al. \cite{murtagh2015complexity} prove a tight heterogeneous composition theorem in a similar manner. Dong et al. \cite{fdp} take a different approach and show a central-limit like composition result for sufficiently smooth trade-off functions, yielding an approximation method for composition.

\subsection{Our Contributions}

We build on the existing composition results and investigate the composition of double-DP constraint mechanisms, which satisfy $(\eps_1,\delta_1)$-DP and $(\eps_2,\delta_2)$-DP, simultaneously. In Theorems \ref{thm:main} and \ref{thm:alt-main}, we establish the exact composition region in this setting. Our result in Theorem \ref{thm:main} reveals the intimate relationship between the composition of double-DP constraint mechanisms and the heterogeneous composition. 
%It turns out that the composition of double-DP constrained mechanisms can be written as a mixture of compositions of heterogeneous single-DP constrained mechanisms. This relationship is a consequence of the hypothesis testing view of DP  together with Lemma \ref{lem:mixture} and Corollary \ref{corollary:mixture}. This link is particularly interesting as it can be naturally generalized to $n$-DP constrained mechanisms, hinting at how to compute the composition of $n$-DP constrained mechanisms via the heterogeneous composition.
Thus, we find the exact heterogeneous composition of $n$ $\eps_1$-DP mechanisms and $m$ $\eps_2$-DP mechanisms in Theorem \ref{thm:heter}. This result achieves a lower computational complexity compared to \cite{murtagh2015complexity}. In addition to facilitating Theorem \ref{thm:main}, Theorem \ref{thm:heter} is of independent interest. Its setting reflects a realistic scenario in which the privacy constraints need to be adjusted at some point to satisfy updated requirements such as the number of queries or the privacy budget.
Finally, in Propositions \ref{prop:approx_below} and \ref{prop:approx_above}, we develop a method to approximate $f$-DP composition via double-DP constraint composition, motivated by \cite{fdp}.

The remainder of this paper is structured as follows. In Section \ref{sec:prelim}, we introduce a basic lemma for the mixture of hypothesis tests. In Section \ref{sec:main:results}, we present our main results, namely the heterogeneous composition and double-DP constraint composition. We detail an approximation method for $f$-DP composition and illustrate it with an example in Section \ref{sec:fdp}. Section \ref{sec:conclusion} concludes the paper.

\section{Preliminaries}\label{sec:prelim}
\subsection{Hypothesis Testing and $f$-DP}
Binary hypothesis testing provides a very interesting point of view to understand and study DP. First, recall the binary hypothesis testing problem $\calH(P_0,P_1)$. Given probability distributions $P_0$ and $P_1$ on the same space:
\begin{align}
    \calH_0 &: Y \sim P_0  \nonumber \\
    \calH_1 &: Y \sim P_1. \nonumber%\label{eq:binHT}
\end{align}
The goal is to design a \textit{measurable} and \textit{possibly non-deterministic} decision rule $\hatcalH : \calY \to \{ 0, 1 \}$. For a given $\hatcalH$, we define two errors:
\begin{align}
      \beta_\mathrm{I} := \beta_\mathrm{I}(\hat\calH) &= \bbP [ \hatcalH = 1 | \calH = 0 ] \label{eq:typeI} \mbox{ and} \\
      \beta_\mathrm{II} := \beta_\mathrm{II}(\hat\calH) &= \bbP [ \hatcalH= 0 | \calH = 1 ], \label{eq:typeII} 
\end{align}
where (\ref{eq:typeI}) is type I error or the probability of false alarm, and (\ref{eq:typeII}) is type II error or the probability of missed detection. In the conventional hypothesis testing problem, the goal is minimizing these errors. However, from the perspective of privacy, we would like the adversary to have high errors when performing certain hypothesis tests. 
%Thus, we are interested in understanding the achievable errors in the adversarial hypothesis test for a given privacy guarantee. For this purpose, we use trade-off functions.

Given a hypothesis test $\calH(P_0,P_1)$, we define the trade-off function $f(P_0, P_1) : [0,1] \to [0,1]$ as
\begin{equation*}
f(P_0,P_1)(t) = \inf_{\hat\calH} \cbr{\beta_\mathrm{II} | \beta_\mathrm{I} \leq t}. %\label{eq:tradeoff}
\end{equation*}
We can interpret DP from the point of view of hypothesis testing \cite{wasserman2010statistical,compositiontheorem}. Define the trade-off function 
\begin{equation*}
    f_{\eps,\delta}(t) = \max \cbr{0, 1 - \delta - e^{\eps}t, e^{-\eps}( 1 - \delta - t)}, \ t \in [0,1].
\end{equation*}
An $(\eps,\delta)$-DP mechanism $M$ satisfies, for all neighboring datasets $D_0, D_1 \in \calD$, with $M(D_i) \sim P_{M(D_i)}$,
\begin{equation}
    f\br{P_{M(D_0)},P_{M(D_1)}} \geq f_{\eps,\delta} \label{eq:f_eps,delta}.
\end{equation}
% \begin{lemma}[$f$-DP generalizes $(\eps,\delta)$-DP,\cite{wasserman2010statistical,compositiontheorem,fdp}]\label{lem:ldp_hypotest}$M$ is $(\eps,\delta)$-DP if and only if $M$ is $f_{\eps,\delta}$-DP with
% \begin{align*}
%     f_{\eps,\delta}(t) = \max \cbr{0, 1- \delta - e^{\eps}t, e^{-\eps}( 1- \delta - t)}, \ t \in [0,1] \label{eq:ldp_trade}. \\
%     f_(\bbP[ \calM(\cdot) | D_0],\bbP[ \calM(\cdot) | D_1]) \leq f_{\eps,\delta}(t) 
% \end{align*}
% \end{lemma}

This interpretation inspired a very general variant of DP, called $f$-DP \cite{fdp}. Informally, a mechanism $M$ satisfies $f = f(P,Q)$-DP if differentiating two neighboring databases $D_0$ and $D_1$ through $M(D_0)$ and $M(D_1)$ is at least as difficult as differentiating $P$ and $Q$, in a strong sense. 
%Formally, letting $M = P_{Y|D}$ and denoting $P_{Y|D_i}$ the distribution of the output $M(D_i)$ for $i \in \cbr{0,1}$, $M$ is $f$-DP if and only if $f(P_{Y|D_0}, P_{Y|D_1}) \geq f$ for all neighboring $D_0, D_1$.
Equation \eqref{eq:f_eps,delta} indicates that $f$-DP reduces to $(\eps,\delta)$-DP with the appropriate trade of function, i.e. $f = f_{\eps,\delta}$. Likewise, multiple $(\eps_i,\delta_i)$-DP constraints could also be reflected in the $f$-DP framework: let $\boldeps = (\eps_1, \ldots, \eps_n) \in (\R^+)^n$, $\bolddelta = (\delta_1, \ldots, \delta_n) \in [0,1]^n$, and %for $0 \leq t \leq 1$,
\begin{equation}
    f_{\boldeps,\bolddelta}(t) = \max_{(\eps, \delta) \in (\boldeps,\bolddelta)} f_{\eps,\delta}(t) \ \forall t \in [0,1].
\end{equation}

\subsection{Privacy Regions}

Privacy regions offer an alternative way to express trade-off functions. They include all achievable error pairs and have one to one correspondence to the trade-off functions.
\begin{definition}[Privacy Region]
Let $f = f(P_0,P_1)$ be a trade-off function for a hypothesis test $\calH(P_0,P_1)$. The corresponding privacy region $\regtradeoff{f} \subseteq [0,1]^2$ is defined as the set of achievable $(\beta_\mathrm{I}, \beta_\mathrm{II})$ error pairs, as follows:
\begin{equation*}
    \regtradeoff{f} = \cbr{\begin{bmatrix}
        \beta_\mathrm{I} \\ \beta_\mathrm{II}
    \end{bmatrix} \in [0,1]^2 \ | \ 1 - \beta_\mathrm{I} \geq  \beta_\mathrm{II} \geq f(\beta_\mathrm{I})}. %\label{eq:priv-region}
\end{equation*}
\end{definition}

In our work, we state our results as bounds on privacy regions. As in \cite{compositiontheorem}, we write the privacy region associated with $f_{\eps,\delta}$ as $\regdp{\eps}{\delta}$.

We denote the privacy region of the intersection of the $n$ $(\eps_i,\delta_i)$-DP guarantees by 
\begin{equation}
    \regdp{\boldeps}{\bolddelta} = \bigcap_{i=1}^n \regdp{\eps_i}{\delta_i}.
\end{equation}
%with associated trade-off function $f_{\boldeps,\bolddelta} = \max_{i=1}^n f_{\eps_i,\delta_i}$. 
In other words, this intersection region signifies the privacy region of a mechanism with trade-off function $f_{\boldeps, \bolddelta}$.
We study the $k$-fold composition of such mechanisms.
\begin{definition}[Composition region for multiple DP constraints]  \label{def:mult_dp_comp}
With the same notation as above, we define the privacy region for the $k$-composition of $(\eps_i,\delta_i)$-DP mechanisms for $i\in\Iintv{1,n}:= [1,n] \cap \mathbb Z$ and $k \in \mathbb{N}^* := \mathbb{Z}_{\geq 1}$ to be $\regcomposedp{\boldeps}{\bolddelta}{k}$. 
\end{definition}

We also study the following heterogeneous composition.

\begin{definition}[Heterogeneous composition region] \label{def:het_comp}
Let $n,m \in \mathbb{N}$. The composition of $n$ $\eps_1$-DP and $m$ $\eps_2$-DP mechanisms is referred to as a heterogeneous composition, with privacy region $\regheterdp{\eps_1}{\eps_2}{n,m}$ or $\regheterdpsingleparam{\boldeps}{n,m}$ with $\boldeps = (\eps_1,\eps_2)$.
\end{definition}    
\subsection{Mixture of Hypothesis Tests}\label{sec:mixture}
In this section, we prove a new lemma on the mixture of hypothesis tests \cite{torgersen1977mixture,lecam1986asymptotic}. This lemma is at the core of our results, linking multiple DP constraint composition and heterogeneous composition. Let $n \in \mathbb{N}^*$ and $P^i,Q^i$ be pairs of distributions on $\mathcal{Y}$, for $i \in \Iintv{1,n}$. We denote the associated hypothesis tests by $\calH^i = \calH(P^i, Q^i)$. The mixture distributions are defined as \[
P(y,i) = \alpha_i P_{Y|I}(y|i) = \alpha_i P^i(y), \ Q(y,i) = \alpha_i Q^i(y)
\]
where $\boldsymbol{\alpha} = (\alpha_i)_{i=1}^n \in [0,1]^n$ are fixed and satisfy $\sum_{i=1}^n \alpha_i = 1$. Then, the observed-class mixture hypothesis test is $\calH_m = \calH(P, Q)$, where the test $\calH^i$ is selected with probability $\alpha_i$. Note that, once randomly selected, $i$ is known to the observer.

%Let $f_i = f(P^i,Q^i)$ be its corresponding hypothesis testing trade-off function, where the $i^{\text{th}}$ hypothesis test is $\calH^i = \calH(P^i, Q^i)$ given data $Y$. Fix $\boldsymbol{\alpha} = (\alpha_i)_{i=1}^n \in [0,1]^n$ such that $\sum_{i=1}^n \alpha_i = 1$. First, we define the mixture distributions $P(y,i) = \alpha_i P_{Y|I}(y|i) = \alpha_i P^i(y)$ and similarly, $Q(y,i) = \alpha_i Q^i(y)$. Then, the observed-class mixture hypothesis test is $\calH_m = \calH(P, Q)$, where the test $\calH^i$ is selected with probability $\alpha_i$.

\begin{lemma}%[Trade-off function of a mixture of hypothesis tests]
\label{lem:mixture}
Let $f_i = f(P^i,Q^i)$ be the trade-off function for $\calH^i$. The trade-off function $f_m = f(P,Q)$ of $\calH_m$ satisfies
\begin{equation}\label{equ:mixture}
    f_m(t) = \min_{\substack{t_i \in [0,1], \ i \in \Iintv{1,n}\\ \sum_{i=1}^n \alpha_i t_i = t}} \sum_{i=1}^n \alpha_i f_i(t_i).
\end{equation}
\end{lemma}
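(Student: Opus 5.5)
We prove the two inequalities separately, using the fact that any test for the mixture decomposes into a family of tests, one per observed class $i$.

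\emph{Decomposition.} Let $\hatcalH$ be any (possibly randomized) decision rule for $\calH_m$. It acts on pairs $(y,i)$, so for each fixed $i$ its restriction $\hatcalH^i := \hatcalH(\cdot,i)$ is a decision rule for $\calH^i$. Denote its errors by $t_i := \beta_\mathrm{I}(\hatcalH^i)$ and $s_i := \beta_\mathrm{II}(\hatcalH^i)$. Because $P(y,i)=\alpha_i P^i(y)$ and $Q(y,i)=\alpha_i Q^i(y)$, conditioning on $I$ gives
\begin{equation*}
\beta_\mathrm{I}(\hatcalH)=\sum_{i=1}^n \alpha_i t_i, \qquad \beta_\mathrm{II}(\hatcalH)=\sum_{i=1}^n \alpha_i s_i .
\end{equation*}
Conversely, any family of rules $(\hatcalH^i)_{i=1}^n$ can be glued into a single rule $\hatcalH(y,i)=\hatcalH^i(y)$. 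It is measurable because $i$ ranges over a finite set. The achievable error pairs of $\calH_m$ are therefore exactly the $\alpha$-weighted combinations of achievable pairs of the $\calH^i$.

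\emph{Lower bound ($\ge$).} Fix $t$ and a rule $\hatcalH$ with $\beta_\mathrm{I}(\hatcalH)\le t$. Each $s_i$ satisfies $s_i\ge f_i(t_i)$, so $\beta_\mathrm{II}(\hatcalH)\ge \sum_i\alpha_i f_i(t_i)$ with $\sum_i \alpha_i t_i \le t$. We need the constraint to hold with equality. Each $f_i$ is nonincreasing, so we raise some of the $t_i$ until $\sum_i\alpha_i t_i=t$; this is possible since $t\le 1$ and all $t_i$ may be raised to $1$. Raising $t_i$ only lowers $\sum_i \alpha_i f_i(t_i)$. Hence $\beta_\mathrm{II}(\hatcalH)$ is at least the right-hand side of \eqref{equ:mixture}. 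Taking the infimum over $\hatcalH$ gives $f_m(t)\ge$ RHS. Coordinates with $\alpha_i=0$ play no role in either side.

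\emph{Upper bound ($\le$).} Fix any feasible $(t_i)$ with $\sum_i\alpha_i t_i=t$. For each $i$ we need a rule $\hatcalH^i$ with $\beta_\mathrm{I}(\hatcalH^i)\le t_i$ and $\beta_\mathrm{II}(\hatcalH^i)=f_i(t_i)$. Such a rule exists because the infimum defining a trade-off function is attained by a randomized Neyman--Pearson test. Gluing these rules gives type I error at most $t$ and type II error $\sum_i\alpha_i f_i(t_i)$, so $f_m(t)\le \sum_i\alpha_i f_i(t_i)$. Taking the infimum over feasible $(t_i)$ gives $f_m(t)\le$ RHS.

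\emph{Main obstacle: the infimum is a minimum.} It remains to justify writing $\min$ rather than $\inf$ in \eqref{equ:mixture}. Trade-off functions are convex, continuous on $[0,1)$ and nonincreasing, and they are lower semicontinuous on all of $[0,1]$. The map $(t_i)\mapsto \sum_i \alpha_i f_i(t_i)$ is therefore lower semicontinuous. The feasible set $\{(t_i)\in[0,1]^n : \sum_i \alpha_i t_i = t\}$ is compact and nonempty, so the infimum is attained. This step, together with the attainment of the Neyman--Pearson infimum used in the upper bound, relies on standard properties of trade-off functions rather than on a long computation.
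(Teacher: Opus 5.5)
Your proof is correct and follows essentially the same route as the paper: you decompose a mixture test into per-class tests, use $\beta_{\mathrm{II}}^i \ge f_i(\beta_{\mathrm{I}}^i)$ together with monotonicity of the $f_i$ to pass from the constraint $\sum_i \alpha_i t_i \le t$ to equality, and conclude attainment from compactness and (semi)continuity. The differences are only streamlinings: you use exact Neyman--Pearson attainment for the upper bound instead of the paper's $\eps$-approximation, and you raise all $t_i$ uniformly toward $1$ instead of using the paper's weighted $\lambda$ construction.
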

%Note that the $\min$ is meant in a strict sense: there exists a minimizer.
\begin{proof}
Consider that $f_m(t) = \inf_{\hat\calH_m}\cbr{\beta_\mathrm{II}^m|\beta_\mathrm{I}^m\leq t}$. For $\nu \in \cbr{\mathrm{I},\mathrm{II}}$, $\beta_\nu(\hat\calH_m) = \sum_{i=1}^n \alpha_i\beta_\nu^i(\hat\calH_m(\cdot,i)) := \sum_{i=1}^n \alpha_i\beta_\nu^i(\hat\calH^i)$ by routine arguments. The reformulated constraint becomes $\beta_\mathrm{I}^m = \sum_{i=1}^n \alpha_i t_i \leq t$, $t_i = \beta_\mathrm{I}^i(\hat\calH^i)$. Then observe that $f_i(t_i) = f_i(\beta_\mathrm{I}^i(\hat\calH^i)) \leq \beta_\mathrm{II}^i(\hat\calH^i)$, leading to the rewriting of $\beta_\mathrm{II}^m$ as $\sum_{i=1}^n \alpha_i f_i(t_i)$. The precise arguments can be found in \ifdefined\extended
Appendix \ref{subsec:mixture_lemma_proof}.
\else
   \cite{arxivextended}.
\fi
\end{proof} 
\begin{figure}[tb]
    \centering
\includegraphics[width=0.8\linewidth]{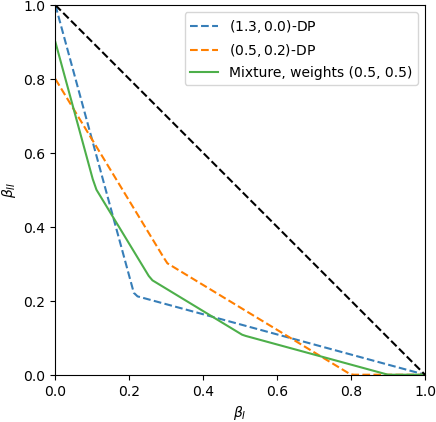}
    \caption{The trade-off functions $f_{1.3,0}$ and $f_{0.5,0.2}$, and their mixture with weights $\alpha_i = 0.5$. The mixture region does not necessarily include the union of the original regions. Likewise, it is not necessarily included in the union of the original regions.} 
    \label{fig:mixture}
\end{figure}
Equation (\ref{equ:mixture}) is known in the convex optimization literature as a weighted \textit{infimal convolution}. Its properties are studied in \cite{bauschke_convex_analysis}, from which the following corollary becomes apparent.
\begin{corollary}\label{corollary:mixture}
With the same notation as above,
\begin{equation}
    \regtradeoffindex{f}{m} = \cbr{\sum_{i=1}^n \alpha _i\begin{bmatrix}
        \beta_\mathrm{I}^i \\ \beta_\mathrm{II}^i
    \end{bmatrix} \ | \ \begin{bmatrix}
        \beta_\mathrm{I}^i \\ \beta_\mathrm{II}^i
    \end{bmatrix} \in \regtradeoffindex{f}{i} } := \sum_{i=1}^n \alpha_i\regtradeoffindex{f}{i}.
\end{equation}
\end{corollary}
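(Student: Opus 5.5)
We prove the set equality by double inclusion. Throughout we use two facts: each $\regtradeoffindex{f}{i}$ is the region between the graph of the convex, continuous, nonincreasing function $f_i$ and the anti-diagonal $\beta_\mathrm{II} = 1-\beta_\mathrm{I}$; and Lemma \ref{lem:mixture} expresses $f_m$ as the weighted infimal convolution of the $f_i$, with the minimum attained.

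For the inclusion ``$\subseteq$'', take $(\beta_\mathrm{I},\beta_\mathrm{II}) \in \regtradeoffindex{f}{m}$, so that $f_m(\beta_\mathrm{I}) \le \beta_\mathrm{II} \le 1-\beta_\mathrm{I}$.
\begin{itemize}
\item By Lemma \ref{lem:mixture}, choose minimizers $t_i$ with $\sum_i \alpha_i t_i = \beta_\mathrm{I}$ and $\sum_i \alpha_i f_i(t_i) = f_m(\beta_\mathrm{I})$. This gives the lower boundary point $p_0 = \sum_i \alpha_i (t_i, f_i(t_i)) \in \sum_i \alpha_i \regtradeoffindex{f}{i}$.
\item The upper boundary point $p_1 = (\beta_\mathrm{I}, 1-\beta_\mathrm{I}) = \sum_i \alpha_i (t_i, 1-t_i)$ also lies in the Minkowski combination, because $\sum_i \alpha_i = 1$ and each $(t_i, 1-t_i) \in \regtradeoffindex{f}{i}$.
\item The target point lies on the vertical segment $[p_0, p_1]$. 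For fixed $t_i$, each vertical segment $\{t_i\} \times [f_i(t_i), 1-t_i]$ lies in $\regtradeoffindex{f}{i}$. Taking the same convex parameter $\lambda$ in every coordinate, the combination $\sum_i \alpha_i\big(t_i, (1-\lambda) f_i(t_i) + \lambda(1-t_i)\big)$ sweeps out the whole segment.
\end{itemize}
Here we need $f_i(t_i) \le 1-t_i$, which holds because a trade-off function lies below the anti-diagonal (random guessing achieves it).

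For the inclusion ``$\supseteq$'', take $(\beta^i_\mathrm{I}, \beta^i_\mathrm{II}) \in \regtradeoffindex{f}{i}$ and set $t = \sum_i \alpha_i \beta_\mathrm{I}^i$.
\begin{itemize}
\item The $\beta^i_\mathrm{I}$ are a feasible choice of the $t_i$ in \eqref{equ:mixture}, so $f_m(t) \le \sum_i \alpha_i f_i(\beta^i_\mathrm{I}) \le \sum_i \alpha_i \beta^i_\mathrm{II}$.
\item Summing $\beta_\mathrm{II}^i \le 1-\beta_\mathrm{I}^i$ with weights $\alpha_i$ gives $\sum_i \alpha_i \beta_\mathrm{II}^i \le 1-t$.
\end{itemize}
Together with $t \in [0,1]$, this places the combined point in $\regtradeoffindex{f}{m}$.

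The only delicate point is attainment of the minimum in Lemma \ref{lem:mixture}. If only an infimum were available, $p_0$ could fail to belong to the Minkowski combination. Since the lemma states a minimum, and the constraint set is compact while the $f_i$ are continuous, this is handled.

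Alternatively, one could cite \cite{bauschke_convex_analysis}: the epigraph of an infimal convolution is the Minkowski sum of the epigraphs. Intersecting with the anti-diagonal constraint would then need exactly the argument above.
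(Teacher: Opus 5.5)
Your proof is correct. It follows the paper's double-inclusion structure, but the inclusion $\mathcal{F}_m \subseteq \sum_i \alpha_i \mathcal{F}_i$ is argued along a somewhat different, more direct route.

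The inclusion $\sum_i \alpha_i \mathcal{F}_i \subseteq \mathcal{F}_m$ is exactly the paper's argument: the $\beta_\mathrm{I}^i$ are feasible in the minimization, and the anti-diagonal constraints average to $1-t$.

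For the reverse inclusion, the paper splits into three cases:
\begin{enumerate}
\item points on the graph of $f_m$, handled via the minimizer from Lemma~\ref{lem:mixture};
\item points with $\beta_\mathrm{II} \ge \max_i f_i(\beta_\mathrm{I})$, which lie in $\bigcap_i \mathcal{F}_i$ and are written as $\sum_i \alpha_i \beta^m$ with all components equal;
\item the intermediate vertical segment, handled by convexity of a Minkowski combination of convex sets.
\end{enumerate}

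You instead fix the minimizers $t_i$ once. You then move every component up its own fiber $\{t_i\}\times[f_i(t_i),1-t_i]$ with a common parameter $\lambda$. This decomposes the entire vertical segment from $f_m(\beta_\mathrm{I})$ to $1-\beta_\mathrm{I}$ with one explicit formula.

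What your route buys: it removes the case split, and it never invokes convexity of the $\mathcal{F}_i$ or of their Minkowski combination. It uses only the definition of the regions and the bound $f_i(t)\le 1-t$. The paper also needs that bound, tacitly, for the upper endpoint $(\beta_\mathrm{I}, f_{i^*}(\beta_\mathrm{I}))$ in its third case.

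What the paper's route buys: it exhibits the trivial equal-component representation for points lying above every $f_i$. Yours is shorter and gives an explicit decomposition for every point.

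Both arguments rest on the same essential ingredient: attainment of the minimum in Lemma~\ref{lem:mixture}, which you correctly flag.
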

In the cases that will be of interest to us, $f_i$s are piecewise affine functions. This makes the computation of $f_m$ and thus $\regtradeoffindex{f}{m}$ possible through properties of \textit{convex conjugates} of piecewise affine functions \cite{boyd_convex_optimization}. Details are in
\ifdefined\extended
Appendix \ref{subsec:corollary_mixture_proof} and Appendix \ref{subsec:computation_piecewise_affine}.
\else
   \cite{arxivextended}.
\fi

See Figure \ref{fig:mixture} for an example of the mixture of two equally weighted DP regions.

\section{Composition Theorems} \label{sec:main:results}
\subsection{Heterogeneous Composition} \label{sec:hetero}
%We first give the algorithm that computes privacy regions of heterogeneous compositions.
We state the theorem on heterogeneous composition (cf. Definition \ref{def:het_comp}) and the accompanying Algorithm \ref{alg:het_comp}.
\begin{algorithm}[tb]%[tb]
    \caption{Heterogeneous Composition
    }
    \label{alg:het_comp}
    \begin{algorithmic}[1]
        \State \textbf{Input:} $\eps_1, \eps_2, x, y$
        \State Find the set $S(\eps_1,\eps_2;x,y)$ of all $(a^*,b^*) \in \Iintv{0,x} \times \Iintv{0, y}$ such that $\eps_{a^*,b^*}^{x,y} = \eps_1(x-2a^*)+\eps_2(y-2b^*) \geq 0$.
        \State (Optional, for efficiency) Remove all but one $(a^*,b^*)$ that has the same slope $\eps_{a^*,b^*}^{x,y}$ from the set $S(\eps_1,\eps_2;x,y)$.
        \For{$(a^*,b^*)$ \textbf{in} $S(\eps_1,\eps_2;x,y)$} compute $\delta_{a^*,b^*}^{x,y}$ as
            \begin{align*}
              \delta_{a^*,b^*}^{x,y} =\  \br{\frac{1}{e^{\eps_1}+1}}^x \br{\frac{1}{e^{\eps_2}+1}}^y \sum_{b=0}^y\sum_{a=a_0(b)}^x \binom{x}{a}\binom{y}{b}\cdot\\\br{e^{a\eps_1 + b\eps_2}- e^{\eps_1(2(x-a^*)-a) + \eps_2(2(y-b^*)-b)}}
            \end{align*}
            where
            $
                a_0(b) = \max\br{0,\lceil (y-b^*-b)\frac{\eps_2}{\eps_1} + (x-a^*) \rceil}.
            $
        \EndFor
        \State Intersect the regions:
        \begin{align*}
        \regheterdp{\eps_1}{\eps_2}{x,y} = \bigcap_{(a^*,b^*) \in S(\eps_1,\eps_2;x,y)} \regdp{\eps_{a^*,b^*}^{x,y}}{\delta_{a^*,b^*}^{x,y}}.
    \end{align*}
        \State \textbf{Output:} $\regheterdp{\eps_1}{\eps_2}{x,y}$
    \end{algorithmic}
\end{algorithm}

\begin{theorem}%[Privacy region of the composition of heterogeneous mechanisms]
\label{thm:heter}

Let $\eps_1 > \eps_2 > 0$, and $x,y \in \mathbb{N}$. The exact privacy region of the composition of $x$ $(\eps_1,0)$-DP and $y$ $(\eps_2,0)$-DP mechanisms $\regheterdp{\eps_1}{\eps_2}{x,y}$ is computed by Algorithm \ref{alg:het_comp}.
\end{theorem}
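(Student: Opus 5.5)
Our plan follows the strategy of Kairouz et al.~\cite{compositiontheorem} for the homogeneous case, adapted to two privacy levels.

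\textbf{Step 1: reduction to a dominating pair.} We first argue that the worst case for adaptive composition is attained by the canonical randomized-response pairs. For pure $\eps$-DP, every pair $(P_0,P_1)$ satisfying \eqref{eq:f_eps,delta} with $\delta=0$ is a post-processing of the binary pair
\[
\tilde P_0^{\eps}=\br{\tfrac{e^{\eps}}{1+e^{\eps}},\tfrac{1}{1+e^{\eps}}}, \qquad \tilde P_1^{\eps}=\br{\tfrac{1}{1+e^{\eps}},\tfrac{e^{\eps}}{1+e^{\eps}}}.
\]
This holds because the region $\regdp{\eps}{0}$ is exactly the privacy region of this pair. The same simulation argument as in~\cite{compositiontheorem} then applies: each adaptive mechanism, conditioned on the previous outputs, can be simulated from a fresh independent copy of the canonical pair at its own level $\eps_1$ or $\eps_2$. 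Consequently the order of the mechanisms and adaptivity do not matter, and $\regheterdp{\eps_1}{\eps_2}{x,y}$ is the privacy region of the product test between $(\tilde P_0^{\eps_1})^{\otimes x}\otimes(\tilde P_0^{\eps_2})^{\otimes y}$ and the corresponding product of the $\tilde P_1$'s.

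\textbf{Step 2: privacy loss and Neyman--Pearson.} In the product test, the sufficient statistic is the pair $(a,b)$, where $a$ counts the "agreeing" coordinates among the $x$ coordinates and $b$ among the $y$ coordinates. The log-likelihood ratio is
\[
L=\eps_1(2a-x)+\eps_2(2b-y),
\]
and the probability of $(a,b)$ under $P_0$ is $\binom{x}{a}\binom{y}{b}e^{a\eps_1+b\eps_2}(e^{\eps_1}+1)^{-x}(e^{\eps_2}+1)^{-y}$.

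The trade-off function of a test with finitely many likelihood-ratio values is piecewise affine. Its breakpoints occur at thresholds equal to the possible values of $L$, and its slopes are $-e^{\eps}$ for those same values. A convex piecewise affine trade-off function is the maximum of its supporting lines. For a supporting line of the form $1-\delta-e^{\eps}t$, the optimal $\delta$ at level $\eps$ is the hockey-stick divergence
\[
\delta(\eps)=\sum_{(a,b)}\br{P_0(a,b)-e^{\eps}P_1(a,b)}^+ .
\]
The symmetric lines $e^{-\eps}(1-\delta-t)$ come for free, since the test is symmetric under exchanging $P_0$ and $P_1$. Hence the privacy region is the intersection of $\regdp{\eps}{\delta(\eps)}$ over the nonnegative values $\eps=\eps^{x,y}_{a^*,b^*}$ of $L$, which is the set $S(\eps_1,\eps_2;x,y)$. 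We reindex the candidate values as $x-2a^*$, $y-2b^*$, and we may drop duplicate slopes.

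\textbf{Step 3: identify $\delta$.} We use $P_1(a,b)=P_0(a,b)e^{-L}$ and expand the positive part. The summand is positive exactly when $L(a,b)>\eps^{x,y}_{a^*,b^*}$, equivalently
\[
\eps_1 a>\eps_1(x-a^*)+\eps_2(y-b^*-b).
\]
This gives the lower limit $a_0(b)$ on $a$. At equality the term vanishes, so it does not matter whether it is included.

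Substituting, the term $e^{\eps}P_1(a,b)$ produces the factor
\[
e^{\eps_1(x-2a^*)+\eps_2(y-2b^*)}e^{-(2a-x)\eps_1-(2b-y)\eps_2}e^{a\eps_1+b\eps_2}=e^{\eps_1(2(x-a^*)-a)+\eps_2(2(y-b^*)-b)},
\]
which matches the algorithm's formula for $\delta^{x,y}_{a^*,b^*}$.

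\textbf{Main obstacle.} We expect the main difficulty to be Step 1 in the heterogeneous adaptive setting. We must show that the simulation argument still works when the adversary adaptively chooses mechanisms at a fixed multiset of levels, so that the composed region is exactly, and not merely contained in, the product region. Exactness also needs achievability, which we obtain by taking the canonical mechanisms themselves. A secondary point is checking that using only the slopes from $S$, rather than all $\eps\ge0$, loses nothing. This holds because the vertices of the piecewise affine trade-off curve correspond to likelihood-ratio levels.
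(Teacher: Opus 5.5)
Your proposal is correct and takes essentially the same route as the paper. It uses randomized response as the dominating pair (via Blackwell's simulation theorem and the converse argument of \cite{compositiontheorem}), Neyman--Pearson tests at the likelihood-ratio levels $\eps^{x,y}_{a^*,b^*}$, and hockey-stick offsets $\delta^{x,y}_{a^*,b^*}$ computed by grouping outcomes by $(a,b)$, which gives the lower limit $a_0(b)$. The remaining differences are cosmetic: you use binary coordinates with agreement counts instead of the paper's four-letter alphabet with $d=a^*$, $c=b^*$, and you present the converse before the computation. One small slip: $\sum(P_0-e^{\eps}P_1)^+$ is strictly the offset of the line $e^{-\eps}(1-\delta-t)$, not of $1-\delta-e^{\eps}t$, which is harmless given the $P_0\leftrightarrow P_1$ symmetry you invoke.
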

\begin{proof}
Define distributions $P_i^1$ on $\cbr{0,3}$ and $P_i^2$ on $\cbr{1,2}$: 
\begin{equation*}
     P_i^1(x) = \begin{cases}
        \frac{e^{\eps_1}}{e^{\eps_1} + 1} &\text{ if } (i = 0, x=0) \text{ or }(i=1, x= 3),\\
        \frac{1}{e^{\eps_1} + 1} &\text{ if } (i = 0, x=3) \text{ or }(i=1, x= 0),
    \end{cases}
\end{equation*}
\vspace{1pt}
\begin{equation*}
    P_i^2(x) = \begin{cases}
        \frac{e^{\eps_2}}{e^{\eps_2} + 1} &\text{ if } (i = 0, x=1) \text{ or }(i=1, x= 2),\\
        \frac{1}{e^{\eps_2} + 1} &\text{ if } (i = 0, x=2) \text{ or }(i=1, x= 1).
    \end{cases}
\end{equation*}
For $j \in \cbr{1,2}$, let the mechanism $M^j$ output $X_0^j \sim P_0^j$ in the case of the null hypothesis, $X_1^j \sim P_1^j$ otherwise. Thus $M^j$ is binary randomized response \cite{dwork2006calibrating}, which achieves the whole of the $\eps_j$-DP region \cite{compositiontheorem}. Let $M^{x,y}$ be the composition of $x$ replicas of $M^1$ concatenated with $y$ replicas of $M^2$. Then $M^{x,y}$ outputs $X_0 = ((X_0^1)^x,(X_0^2)^y) \sim \Tilde{P}_0 = (P_0^1)^x(P_0^2)^y$ in the case of the null hypothesis, and $X_1 = ((X_1^1)^x,(X_1^2)^y) \sim \Tilde{P}_1 = (P_1^1)^x(P_1^2)^y$ otherwise, where $\Tilde{P}_0$ and $\Tilde{P}_1$ are distributions on $\Iintv{0,3}^{x+y}$.
It remains to compute the privacy region of $M$ and to show that it is the largest composition region in this case. This is done in a similar fashion as to \cite{compositiontheorem,murtagh2015complexity}. 
\ifdefined\extended
The details are in Appendix \ref{subsec:proof_heter}.
\else
   Due to space constraints, we leave the details for \cite{arxivextended}.
\fi
\end{proof}
Unlike \cite{murtagh2015complexity}, which, for a given $\delta_i$, returns the corresponding $\eps_i$, Algorithm 1 yields a closed form of the heterogeneous composition region, giving all the $(\eps_i,\delta_i)$ defining it.

\subsection{Composition Theorem for Double-DP Constraints}
In this section, we study the double-DP composition. We present two theorems that achieve the same privacy region. Theorem \ref{thm:main} depends on the heterogeneous composition regions $\regheterdpsingleparam{\boldeps}{x,y}$, and can be generalized to multiple-DP composition. Theorem \ref{thm:alt-main} provides a closed form expression and does not depend on $\regheterdpsingleparam{\boldeps}{x,y}$.
\begin{assumption}\label{assumption:main_thm}
Throughout this section, we let $\boldeps = (\eps_1, \eps_2) \in (\R^+)^2$ and $\bolddelta = (\delta_1,\delta_2) \in [0,1)^2$. We assume that $\delta_1 < \delta_2 \text{ and } (1-\delta_1)(1+e^{\eps_2}) < (1-\delta_2)(1+e^{\eps_1})$, making both $(\eps_i,\delta_i)$-DP constraints active. We also let $k\in\mathbb{N}^*$.
\end{assumption}
\begin{remark}\label{remark:intersection_single_dp}
A first upper bound of the double-DP $k$-composition privacy region is the intersection of the privacy regions for each DP guarantee, i.e. $
\regcomposedp{\boldeps}{\bolddelta}{k} \subseteq \regcomposedp{\eps_1}{\delta_1}{k} \cap \regcomposedp{\eps_2}{\delta_2}{k}$, known thanks to \cite{compositiontheorem}.  
\end{remark}
Remark \ref{remark:intersection_single_dp} gives a relatively trivial baseline for our main composition result. This baseline can be improved by leveraging the total variation bound induced by DP, as follows:
\begin{remark}\label{remark:intersection_dptv}
As discussed in \cite{dptv}: Let $\eta = \delta_2 + (1-\delta_2)\frac{e^\eps_2-1}{e^\eps_2+1}$. Then, $\regcomposedp{\boldeps}{\bolddelta}{k} \subseteq \regcomposedp{\Tilde{\boldeps}}{\Tilde{\bolddelta}}{k} \cap \regcomposedp{\eps_2}{\delta_2}{k}$, where $\Tilde{\boldeps} = (\eps_1, 0)$ and $\Tilde{\bolddelta} = (\delta_1,\eta)$.
\end{remark}
\noindent In our main composition results, we derive the exact double-DP composition privacy region which improves on Remarks 1 and 2.
%As seen below, these bounds are not tight in general.
\begin{theorem}%[Privacy region of the composition of double-DP constrained mechanisms]
\label{thm:main}
%Let $\eps_1, \eps_2 > 0$ and $\delta_1,\delta_2 \in [0,1]$ such that $\delta_1 < \delta_2 \text{ and } (1-\delta_1)(1+e^{\eps_2}) < (1-\delta_2)(1+e^{\eps_1})$. 
The privacy region of the composition of $k$ mechanisms which are $(\eps_1,\delta_1)$ and $(\eps_2,\delta_2)$-DP is
\begin{align*}
\regcomposedp{\boldeps}{\bolddelta}{k} = \Tilde{\delta} \regdp{0}{1} + (1-\Tilde{\delta}) \sum_{i=0}^k\binom{k}{i}(1-\alpha)^i \alpha^{k-i}\regheterdpsingleparam{\boldeps}{i,k-i}
\end{align*}
where $\boldeps = (\eps_1,\eps_2)$, $\bolddelta = (\delta_1,\delta_2)$, $\Tilde{\delta} = (1 - (1-\delta_1)^k)$ and
\begin{align*}
    \alpha = \frac{(1-\delta_1)e^{\eps_2} - (1-\delta_2)e^{\eps_1} + (\delta_2-\delta_1)}{(e^{\eps_2} - e^{\eps_1})(1-\delta_1)}.
\end{align*}
\end{theorem}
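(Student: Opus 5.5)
The proof has three parts. First, decompose the double-DP trade-off function as a mixture in the sense of Lemma \ref{lem:mixture}. Second, lift this decomposition to $k$-fold composition via Corollary \ref{corollary:mixture}. Third, show the resulting region is both achievable and an upper bound.

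\emph{Single-mechanism decomposition.} We claim that $\regdp{\boldeps}{\bolddelta}$ is exactly the privacy region of an observed-class mixture with three components:
\begin{itemize}
\item with probability $\delta_1$, a trivially distinguishable test, with region $\regdp{0}{1}$;
\item with probability $(1-\delta_1)(1-\alpha)$, binary randomized response with $\eps_1$;
\item with probability $(1-\delta_1)\alpha$, binary randomized response with $\eps_2$.
\end{itemize}
To verify this, write $f_{\boldeps,\bolddelta}$ as a piecewise affine function with breakpoints. Under Assumption \ref{assumption:main_thm} its slopes are $-e^{\eps_1}$, $-e^{\eps_2}$, $-e^{-\eps_2}$, $-e^{-\eps_1}$, and it is symmetric. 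Next, compute the infimal convolution in Lemma \ref{lem:mixture} of $f_{0,1}$, $f_{\eps_1,0}$ and $f_{\eps_2,0}$ with the stated weights. An infimal convolution of piecewise affine convex functions is obtained by sorting all segments by slope and concatenating them. Each segment is scaled by its weight, so the $\eps_j$ segments have horizontal length $(1-\delta_1)w_j/(1+e^{\eps_j})$, and the trivial test shifts the start point to $(0,1-\delta_1)$. Matching the first breakpoint gives $(1-\delta_1)(1-\alpha)/(1+e^{\eps_1}) \cdot (e^{\eps_1}-e^{\eps_2})\cdot\ldots$; concretely, we solve for $\alpha$ from the condition that the $-e^{\eps_2}$ line hits the $t=0$ axis at $1-\delta_2$. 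This is a linear equation in $\alpha$ and should return the displayed formula. The assumption guarantees $\alpha\in(0,1)$.

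\emph{Composition.} The composition of $k$ mechanisms, each distributed as such a mixture, is again a mixture: with $k$ independent observed labels, the product of mixtures is the mixture of products. The components split as follows.
\begin{itemize}
\item If any of the $k$ labels is the trivial one, the composed test is trivially distinguishable. This happens with probability $\tilde\delta=1-(1-\delta_1)^k$.
\item Otherwise, with conditional probability $\binom{k}{i}(1-\alpha)^i\alpha^{k-i}$, we get $i$ copies of $\eps_1$-randomized response and $k-i$ copies of $\eps_2$-randomized response, whose region is $\regheterdpsingleparam{\boldeps}{i,k-i}$ by Theorem \ref{thm:heter}.
\end{itemize}
Since the labels are observed, Corollary \ref{corollary:mixture} (applied with the labels grouped) gives exactly the Minkowski combination on the right-hand side. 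Hence the right-hand side is achieved by a valid composition of $(\eps_1,\delta_1)$- and $(\eps_2,\delta_2)$-DP mechanisms, which establishes one inclusion.

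\emph{Optimality (the main obstacle).} We must show that every adaptive $k$-composition has a region inside the right-hand side. The plan follows \cite{compositiontheorem}. Any pair $(P_0,P_1)$ whose region lies in $\regdp{\boldeps}{\bolddelta}$ can be simulated by post-processing the canonical mixture mechanism: the canonical mechanism's test dominates every test in that class (Blackwell). Having $f(P_0,P_1)\ge f_{\boldeps,\bolddelta}$ means the canonical mixture has trade-off $f_{\boldeps,\bolddelta}$, which is the least informative allowed. By the Blackwell/trade-off equivalence, there is a randomized map $T$ with $T(\text{canonical}_b)=P_b$. For adaptive composition, we apply this conditionally at each step, with $T$ depending on previous outputs, as in \cite{compositiontheorem}. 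This shows the composed experiment is a post-processing of the $k$-fold canonical mixture, so its region is contained in the canonical one. The delicate point is doing this post-processing measurably and adaptively. The first approach I would try is to invoke the existing result from \cite{compositiontheorem}, since their simulation argument only uses that a dominating mechanism exists for each single step.
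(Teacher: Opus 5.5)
Your proposal is correct and follows the paper's proof. It uses the same three-component mixture (a trivially distinguishable test with weight $\delta_1$, and $\eps_1$- and $\eps_2$-randomized response with weights $(1-\delta_1)(1-\alpha)$ and $(1-\delta_1)\alpha$), Corollary~\ref{corollary:mixture} together with Theorem~\ref{thm:heter} for achievability, and Blackwell simulation plus the converse argument of \cite{compositiontheorem} for optimality. Your infimal-convolution check, which shows the single-step mixture has trade-off $f_{\boldeps,\bolddelta}$ and recovers the stated $\alpha$, makes explicit what the paper only calls easy to check; one correction is that the canonical mechanism is the \emph{most} informative experiment allowed by the constraints, not the least.
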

\begin{proof}
We start with the case $\delta_1 = 0$, implying $\Tilde{\delta}= 0$. Reusing the notation of Theorem \ref{thm:heter}'s proof, define the distributions on $\Iintv{0,3}$, for $i\in \cbr{0,1}$
\begin{equation*}
    P_i(x) = \begin{cases}
            (1-\alpha)P_i^1(x) &\text{ if } x \in\cbr{0,3},\\
        \alpha P_i^2(x) &\text{ if } x \in\cbr{1,2}.
    \end{cases}
\end{equation*}
Let $M$ be the mechanism that outputs $X_i \sim P_i$ under the hypothesis $\calH_i$. It is easy to check that $M$ has privacy region $\regdp{\boldeps}{\bolddelta}$. Observe that $M$ picks $M^1$ with probability $1-\alpha$ and $M^2$ with probability $\alpha$. Let $M_k$ be the $k$-composition of replicas of $M$, thus $M_k$ picks $M^1$ $i$ times with probability $1-\alpha$ each time independently, thus with probability $\binom{k}{i}(1-\alpha)^i\alpha^{k-i}$ overall $M_k$ picks the mechanism $M^{i,k-i}$. Thus the result follows by applying Corollary \ref{corollary:mixture} and Theorem \ref{thm:heter}. We extend to $\delta_1 > 0$ in 
\ifdefined\extended
Appendix \ref{subsubsec:extension_delta_1}.
\else
 \cite{arxivextended}.
\fi
\end{proof}
%\noindent An alternative way to describe the same region that does not rely on $\regheterdpsingleparam{\boldeps}{i,k-i}$ is given in the following theorem.
\begin{theorem}%[Privacy region of the composition of double-DP constrained mechanisms, alt.] 
\label{thm:alt-main}
Under Assumption \ref{assumption:main_thm},
\begin{equation*}
\regcomposedp{\boldeps}{\bolddelta}{k} = \Tilde{\delta}\regdp{0}{1} + (1-\Tilde{\delta}) \bigcap_{\substack{u,v \in \Iintv{0,k} \\ u \geq \left\lceil \frac{k\eps_1 - v(\eps_1 -\eps_2)}{\eps_1 + \eps_2}\right\rceil}} \regdp{\eps_{u,v}}{\delta_{u,v}}
\end{equation*}
where $\eps_{u,v} = \eps_1(u+v-k) + \eps_2(u-v)$ and
\begin{align*}
\delta_{u,v} = \sum_{\substack{(a,b,c,d)\\ \in B(\boldeps;u,v)}} \binom{k}{a,b,c,d}\br{\frac{1-\alpha}{e^{\eps_1}+1}}^{a+d} \br{\frac{\alpha}{e^{\eps_2} + 1}}^{b+c}\cdot \\ \br{e^{a\eps_1 + b\eps_2} - e^{(d+u+v-k)\eps_1 + (c+u-v)\eps_2}},
\end{align*}
with $(a,b,c,d) \in B(\boldeps;u,v) \subseteq \Iintv{0,k}^4$ if and only if
\begin{align*}
\begin{cases}
    a+b+c+d = k,\\
    (a+k-d-u-v)\eps_1 + (b+v-c-u)\eps_2 > 0.
\end{cases}
\end{align*}
\end{theorem}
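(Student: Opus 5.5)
The plan is to show that the intersection formula in Theorem \ref{thm:alt-main} is just the privacy region of the same worst-case composed mechanism used in the proof of Theorem \ref{thm:main}, computed directly instead of through the mixture decomposition. As there, we first treat $\delta_1=0$ (so $\Tilde{\delta}=0$). The case $\delta_1>0$ then follows by the same reduction used in Theorem \ref{thm:main}: peel off the $(0,1)$ component carrying mass $\Tilde{\delta}=1-(1-\delta_1)^k$, with the rescaled parameters absorbed into $\alpha$. Theorem \ref{thm:main} already tells us that $\regcomposedp{\boldeps}{\bolddelta}{k}$ equals the privacy region of $M_k$, the $k$-fold composition of the four-point mechanism $M$ with outputs in $\Iintv{0,3}$. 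So it remains to compute the trade-off function of $(P_0^{\otimes k},P_1^{\otimes k})$ in closed form.

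\emph{Sufficient statistic.} An output of $M_k$ is summarized by the counts $(a,b,c,d)$ of the symbols $0,1,2,3$, which occur with multinomial probability $\binom{k}{a,b,c,d}$. Under $P_0$ the likelihood is $\br{\frac{1-\alpha}{e^{\eps_1}+1}}^{a+d}\br{\frac{\alpha}{e^{\eps_2}+1}}^{b+c}e^{a\eps_1+b\eps_2}$. Under $P_1$ the same prefactor multiplies $e^{d\eps_1+c\eps_2}$. The log-likelihood ratio is therefore $L=(a-d)\eps_1+(b-c)\eps_2$.

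\emph{Region from the likelihood ratio.} Every privacy region of a pair of distributions is the intersection, over $\eps\ge 0$, of the $\regdp{\eps}{\delta(\eps)}$ regions with
\[
\delta(\eps)=\sum_{\omega}\br{P_0(\omega)-e^{\eps}P_1(\omega)}^+ .
\]
This is the standard hockey-stick characterization, combined with the symmetry of the region under swapping the hypotheses. Because $L$ takes finitely many values, only the $\eps$ in the support of $L$ matter: the trade-off function is piecewise affine with breakpoints at those values. These $\eps$ are written $\eps_{u,v}=\eps_1(u+v-k)+\eps_2(u-v)$; the reparametrization by $(u,v)$ is chosen to match the slopes that appear. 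Substituting $\eps=\eps_{u,v}$, the positive part is nonzero exactly when $L>\eps_{u,v}$, which is
\[
(a+k-d-u-v)\eps_1+(b+v-c-u)\eps_2>0 ,
\]
that is, the defining inequality of $B(\boldeps;u,v)$. Factoring out $e^{a\eps_1+b\eps_2}$ from the $P_0$ term, the term $e^{\eps_{u,v}}e^{d\eps_1+c\eps_2}$ becomes $e^{(d+u+v-k)\eps_1+(c+u-v)\eps_2}$. This gives exactly $\delta_{u,v}$.

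\emph{Index range and main obstacle.} It remains to check that the index range ($u,v\in\Iintv{0,k}$ with $u\ge\lceil (k\eps_1-v(\eps_1-\eps_2))/(\eps_1+\eps_2)\rceil$) is exactly $\eps_{u,v}\ge 0$ and covers every attainable value of $L$. Indeed, $\eps_{u,v}\ge0$ is equivalent to $u(\eps_1+\eps_2)\ge k\eps_1-v(\eps_1-\eps_2)$. The main obstacle is surjectivity: we must show that every nonnegative value $(a-d)\eps_1+(b-c)\eps_2$ with $a+b+c+d=k$ has the form $\eps_{u,v}$ with $u,v\in\Iintv{0,k}$. Conversely, any extra $\eps_{u,v}$ that is not an atom of $L$ only adds a redundant (non-binding) constraint, since $\delta(\eps)$ is convex in $e^{\eps}$ and the region with extra tangent lines is unchanged. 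For surjectivity we match $a-d=u+v-k$ and $b-c=u-v$. Given $(a,b,c,d)$, choose $u=(a-d+b-c+k)/2$ and $v=(a-d-b+c+k)/2$. Parity can fail, so instead we argue that the set $\{(a-d,b-c)\}$ consists of integer pairs $(p,q)$ with $|p|+|q|\le k$. The pairs $(u+v-k,u-v)$ cover those with $p+q\equiv k \pmod 2$. For pairs of the other parity we show the value is dominated: decreasing $|q|$ by one (since $\eps_2<\eps_1$) gives a representable value, and we verify that the constraint at an unrepresentable atom is implied by its neighbours. If this fails, we fall back on deriving the formula directly from Theorem \ref{thm:main} by expanding the Minkowski mixture of Algorithm \ref{alg:het_comp}'s regions via Corollary \ref{corollary:mixture}. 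This is a convex-conjugate computation: support functions add, and the summed $\delta$'s regroup into the multinomial sum.
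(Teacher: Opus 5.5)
Your overall route is the paper's. You reduce to $\delta_1=0$ and compute the region of $M_k$ through the counts $(a,b,c,d)$, the log-likelihood ratio $(a-d)\eps_1+(b-c)\eps_2$ and the hockey-stick offsets. You then check that the index range is exactly $\eps_{u,v}\ge 0$. Those steps are fine.

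The problem is the step you yourself single out as the main obstacle. Your claim that ``parity can fail'' is false: for every $(a,b,c,d)$ with $a+b+c+d=k$,
\[
(a-d)+(b-c)\equiv a+b+c+d = k \pmod 2 .
\]
So every attainable pair $(p,q)=(a-d,b-c)$ satisfies $p+q\equiv k \pmod 2$. Hence $u=(k+p+q)/2$ and $v=(k+p-q)/2$ are integers, and they lie in $\Iintv{0,k}$ because $\abs{p\pm q}\le \abs{p}+\abs{q}\le a+b+c+d=k$. This is exactly what the paper proves: $\{(a-d,b-c)\}=\{(x,y)\in\mathbb{Z}^2:\abs{x}+\abs{y}\le k,\ x+y\equiv k \pmod 2\}$, and $(x,y)\mapsto\br{\frac{k+x+y}{2},\frac{k+x-y}{2}}$ is a bijection of this set onto $\Iintv{0,k}^2$. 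Bijectivity also makes your remark about redundant non-atom constraints unnecessary, since every $\eps_{u,v}$ is an attained value of the log-likelihood ratio.

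The workaround you propose for the (nonexistent) other-parity atoms would not have worked, so as written the argument has a hole rather than a detour. In the theorem the region is a \emph{finite} intersection. Each value of the log-likelihood ratio carrying positive mass (here every atom does, as all four symbols have positive probability under both hypotheses) produces its own edge of the polygonal lower boundary. No listed constraint at a different slope supports that edge. Dropping $\regdp{\eps}{\delta(\eps)}$ at that slope lets the two neighbouring edges extend and meet below it, which strictly enlarges the intersection. So the constraint at an atom is never implied by its neighbours. Replacing $q$ by $q\mp 1$ changes the atom rather than representing it.

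Replace that paragraph by the one-line parity observation above and the proof is complete. Your vague fallback through Corollary \ref{corollary:mixture} is then not needed.
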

\begin{proof}
With the same notation as the proof sketch of Theorem \ref{thm:main}: instead of looking at $M_k$ as a mixture of mechanisms, we can also directly tackle it as follows. $M_k$ is the mechanism that will output $(X_i)_1^k \sim \Tilde{P}_i = P_i^k$ under hypothesis $\calH_i$. Then, proceeding again as in the proofs of the main theorems of \cite{compositiontheorem,dptv,murtagh2015complexity}, we find the privacy region of $M_k$ and show that it is the largest that can be. Details are in \ifdefined\extended
Appendix \ref{subsec:proof_alt_main}.
\else
 \cite{arxivextended}.
\fi\end{proof}
%\textbf{Note that this method is trickier - and the results are exactly the same. }
\begin{figure}[tb]
    \centering
    \includegraphics[width=0.8\linewidth]{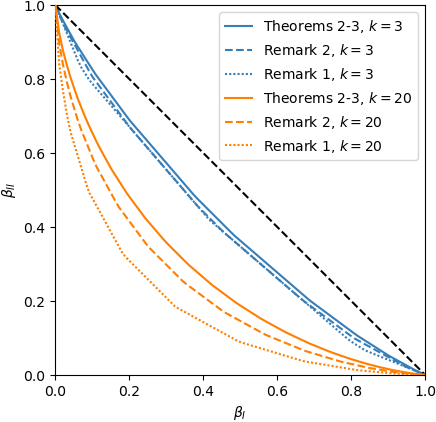}
    \caption{The privacy region of $(\boldeps, \bolddelta)$-DP under $k$-fold composition with $\boldeps = (0.3, 0.15)$, $\bolddelta = (0, 0.02)$ and $k \in \{3, 20\}$, as computed according to our result (Theorems 2-3) and prior works in Remarks \ref{remark:intersection_single_dp} and \ref{remark:intersection_dptv}. It is apparent that the previous bounds are close to the exact privacy region in the high privacy regime and with small $k$. As $k$ increases, these approximations rapidly worsen. }%Comparing our result (Theorems 2-3) to prior work, with $\boldeps = (0.3, 0.15)$ and $\bolddelta = (0, 0.02)$.}
    \label{fig:main_theorem_comparison}
\end{figure}
See Figure \ref{fig:main_theorem_comparison} for comparison of the bounds given by Remarks \ref{remark:intersection_single_dp} and \ref{remark:intersection_dptv} with the exact privacy region computed in Theorems \ref{thm:main} and \ref{thm:alt-main}. %It is apparent that the previous bounds are close to the exact privacy region in the high privacy regime and with small $k$. As $k$ increases, these approximations rapidly worsen.
\section{A Method for Approximating $f$-DP Composition} \label{sec:fdp}
In the following, we algorithmically approximate the $k$-composition of a trade-off function $f$, yielding both lower and upper privacy bounds\footnote{The corresponding code is public on github at {https://go.epfl.ch/multiDP}.}. This is done by approximating $f$ itself from below by a double-DP trade-off function $f_{\boldeps^-,\bolddelta^-}$, and same from above by $f_{\boldeps^+,\bolddelta^+}$, then computing the $k$-composition of each using Theorems \ref{thm:main} or \ref{thm:alt-main}. The function $f_{\boldeps^-,\bolddelta^-}$ is defined by\begin{equation}
    f_{\boldeps^-, \bolddelta^-} \in\arg\min_{\substack{(\boldeps, \bolddelta) \in (\bbR^+)^2 \times [0,1]^2\\ 0 \leq  f_{\boldeps, \bolddelta} \leq f} } \int_0^1 f(t) - f_{\boldeps, \bolddelta}(t) \di t\label{equ:approx_below}
\end{equation} and, similarly, $f_{\boldeps^+,\bolddelta^+}$ is defined by\begin{equation}
    f_{\boldeps^+,\bolddelta^+} \in \arg\min_{\substack{(\boldeps, \bolddelta) \in (\bbR^+)^2 \times [0,1]^2\\ 1 \geq f_{\boldeps, \bolddelta} \geq f} } \int_0^1 f_{\boldeps, \bolddelta}(t) - f(t) \di t\label{equ:approx_above}.
\end{equation}

The $f$-DP framework \cite{fdp} provides central-limit like approximate composition upper and lower privacy bounds, which become tight asymptotically in the number of composed mechanisms. These privacy bounds are given in terms of $G_\mu = f(\mc{N}(0,1), \mc{N}(\mu,1))$ trade-off functions and require the computation of information-theoretic quantities about the composed trade-off functions. The analysis below circumvents the latter and provides guarantees in terms of more familiar $(\eps,\delta)$-DP regions.

\begin{assumption}
$f$ is strictly convex, twice differentiable, and its graph is symmetric w.r.t. the line $y=x$, i.e. $f^{-1} = f$. Let $c \in (0,1)$ be the unique fixed point of $f$, i.e. $f(c) = c$.\label{assumption:approx}
\end{assumption}
\subsection{$f$-DP approximation from below}
Below, we define the normal rotation of $f$. It is the rotation of the graph of $f$ by $\frac{\pi}{4}$, changing the axes of the coordinate system from $y = 0$ and $x = 0$ to $y = -x$ and $y = x$. This allows the removal of the self-symmetry constraint, turning it into an evenness constraint in the rotated domain, which is easier to deal with.
\begin{definition}[Normal rotation of a trade-off function]
Let $f: [0,1] \to [0,1]$ be a trade-off function. Let $z = -\frac{f(0)}{\sqrt{2}}$. For $u \in [z,0]$, denote by $x_u \in [0,c]$ the unique solution of $u = \frac{x-f(x)}{\sqrt{2}},$ where $c \in [0,1]$ is the unique fixed point of $f$. The normal rotation of $f$ is $g:[0,z] \to \R$ such that \[g(u) = \frac{x_u + f(x_u)}{\sqrt{2}}.\]
\end{definition}
For some trade-off functions $f$ it is possible to find $x_u$ analytically. Otherwise, a numerical solver can be used. It can also be shown that $g$ is as smooth as $f$, and $g'(u) = \frac{1+f'(x_u)}{1-f'(x_u)}$, $g''(u) = \frac{2\sqrt{2}f''(x_u)}{(1-f'(x_u))^3}$.\\

With the normal rotation in hand, we can find $f_{\boldeps^-,\bolddelta^-}$. 
\ifdefined\extended
The proof of the following proposition is in Appendix \ref{subsec:proof_approx_below}.
\else
 We defer the proof to \cite{arxivextended} due to space constraints.
\fi
\begin{proposition}%[Approximation from below]
\label{prop:approx_below}
Let $f$ satisfy Assumption \ref{assumption:approx}, and let $g$ be its normal rotation. \\
Denote by $t^* \in [z,0]$ a root of\begin{equation*}
    g\br{\frac{t+z}{2}} + g'\br{\frac{t+z}{2}}\frac{t-z}{2} = g\br{\frac{t}{2}}+g'\br{\frac{t}{2}}\frac{t}{2}
\end{equation*}
then let $t_1 = \frac{t^*+z}{2}$ and $t_2 = \frac{t^*}{2}$. \\
If $t^* \neq z$, $f_{\boldeps^-,\bolddelta^-}$ is given, for $i \in \cbr{1,2}$, by\[
\eps_i^- = \ln\br{-\frac{g'(t_i)-1}{g'(t_i) + 1}}, \ \delta_i^- = 1-\frac{\sqrt{2}\br{g(t_i)-g'(t_i)t_i}}{g'(t_i)+1}.
\]
If $t^* = z$, $f_{\boldeps^-,\bolddelta^-} = f_{\eps_2^-,\delta_2^-}$.
\end{proposition}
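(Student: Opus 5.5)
The plan is to work entirely in the rotated coordinates, where the problem becomes choosing two tangent lines to a convex function. First I will describe what an admissible $f_{\boldeps,\bolddelta}$ looks like there, then fix the optimal shape, and finally optimize over two tangency points.

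\emph{Rotated picture.} Each $f_{\eps,\delta}$ is symmetric about $y=x$ and piecewise affine. In the rotated coordinates its graph is an even, piecewise affine function whose two non-flat pieces are mirror images. A double constraint $f_{\boldeps,\bolddelta}=\max(f_{\eps_1,\delta_1},f_{\eps_2,\delta_2})$ is therefore an even, convex, piecewise affine function $h$ with at most four slopes on $[z,0]$ (using evenness, only the half $u\in[z,0]$ matters). The slopes are of the form $\frac{1-e^{\eps}}{1+e^{\eps}}$, with the flat part $y=0$ of $f_{\eps,\delta}$ also counting as a piece. Since $\int_0^1 (f-f_{\boldeps,\bolddelta})$ is the area between the two graphs, and rotation preserves area, the objective is $2\int_z^0 (g-h)\,du$. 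Up to constants this equals $-\int_z^0 h$.

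\emph{Reduction to tangent lines.} The constraint $0\le f_{\boldeps,\bolddelta}\le f$ becomes $h\le g$ on $[z,0]$ by evenness, together with the endpoint condition coming from $f\ge0$. Because $g$ is concave in these coordinates (from the stated formula for $g''$ and the sign of $f''$ together with $f'<0$), any maximizer of $\int h$ under $h\le g$ can be improved unless each affine piece of $h$ is tangent to $g$. Translating a non-touching line upward increases $h$. The leftmost piece is pinned at $u=z$, because it must pass through $(z,g(z))$ to keep the endpoint $f(0)$ admissible, that is, the constraint $f_{\boldeps,\bolddelta}\ge 0$. The remaining pieces are tangent at points $t_1<t_2\le 0$, and the pieces $\max(\ell_{t_1},\ell_{t_2})$ together with their mirror images are the two $(\eps_i,\delta_i)$ lines.

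Tangency at $t_i$ gives slope $g'(t_i)=\frac{1-e^{\eps_i}}{1+e^{\eps_i}}$, which inverts to $\eps_i=\ln\!\big(-\frac{g'(t_i)-1}{g'(t_i)+1}\big)$. Writing the intercept and rotating back yields the stated $\delta_i$; this is a routine computation.

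\emph{Optimizing the tangency points.} Let $\Phi(t_1,t_2)=\int_z^0 \max(\ell_{t_1},\ell_{t_2},\text{end pieces})\,du$. The area under a piecewise-tangent envelope depends on where consecutive tangents intersect. Setting $\partial\Phi/\partial t_1=\partial\Phi/\partial t_2=0$, and using the standard fact that the derivative of such an area with respect to a tangency point is $g''(t_i)$ times a difference of breakpoints, yields the condition that each tangency point is the midpoint of its segment. The breakpoint between $\ell_{t_1}$ and $\ell_{t_2}$ is some $t$. The segment of $\ell_{t_1}$ is $[z,t]$, so $t_1=\frac{t+z}{2}$. The segment of $\ell_{t_2}$ is $[t,-t]$ reflected, but on the half-line it is $[t,0]$, so $t_2=\frac{t}{2}$. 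The requirement that $\ell_{t_1}$ and $\ell_{t_2}$ actually meet at $t$ is
\[
g(t_1)+g'(t_1)(t-t_1)=g(t_2)+g'(t_2)(t-t_2),
\]
which is exactly the equation defining $t^*$.

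The degenerate root $t^*=z$ means the first segment collapses. In that case only one DP line is used, giving $f_{\eps_2^-,\delta_2^-}$.

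\emph{Main obstacle.} The hard part is the rigorous justification that the optimum has this tangent/midpoint structure, rather than simply assuming it. Specifically, I need to (i) exclude configurations where a DP line is inactive or touches only at the endpoint, and (ii) check that the stationary point is a maximum. I would attack this with the variational argument sketched above: since $g''<0$, the breakpoint-midpoint conditions characterize all critical points. Existence follows from compactness of $[z,0]^2$, and boundary cases reduce to the one-line case handled separately.
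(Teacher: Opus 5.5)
Your outline follows the paper's route: rotate by $\pi/4$, push each DP line up until it is tangent to $g$, then use stationarity to get midpoint conditions plus the intersection equation. However, two steps fail as written.

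The first is a sign slip. Since $f'\le 0$, we have $g''(u)=2\sqrt2\, f''(x_u)/(1-f'(x_u))^3>0$, so $g$ is convex, not concave. Convexity is what puts tangent lines below $g$ and makes ``translate upward until tangent'' produce an interior tangency. Under concavity, a line below $g$ would first touch it at an endpoint of its interval. For the same reason, your second-order check with $g''<0$ has the wrong sign.

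The second, more serious, problem is the midpoint step for $t_1$: it does not follow from the objective you wrote down. You rightly keep the end piece $v=-u$ in $\Phi$. It is the image of the vertical segment $\{0\}\times[0,1-\delta_1]$ and passes through $(z,g(z))=(z,-z)$ automatically, not because of $f_{\boldeps,\bolddelta}\ge 0$. But for $t_1>z$, strict convexity gives $\ell_{t_1}(z)<g(z)=-z$. So $\ell_{t_1}$ is active only on $[b_1,t]$, where $b_1=-(1-\delta_1)/\sqrt2>z$ is its intersection with $v=-u$. Your own envelope computation then gives $\partial\Phi/\partial t_1=g''(t_1)(t-b_1)\left(\frac{b_1+t}{2}-t_1\right)$, i.e.\ $t_1=\frac{b_1+t}{2}$, not $\frac{z+t}{2}$.

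The paper obtains $\frac{z+t^*}{2}$ only because it maximizes $\int_z^{t^*}L_1+\int_{t^*}^0 L_2$. This extends $L_1$ below the vertical axis all the way to $u=z$, and it differs from the true area under $f_{\boldeps,\bolddelta}$ by the triangle between $v=-u$ and $L_1$ on $[z,b_1]$, which depends on $t_1$. So the gap cannot be closed by dropping the end piece. In fact, the stated recipe is not optimal in general.

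Take $f(x)=(1-\sqrt{x})^2$, which is symmetric and strictly convex with $c=\frac14$, and has infinite slope at $0$ like $G_\mu$. Then:
\begin{itemize}
\item $g(u)=(u^2+\frac12)/\sqrt2$, and the defining equation has the unique root $t^*=z/2$.
\item The recipe returns $(\ln 7,\frac18)$ and $(\ln\frac53,\frac38)$, with $\int_0^1 f_{\boldeps,\bolddelta}=\frac{81}{512}$.
\item The admissible pair $(\ln 4,\frac15)$, $(\ln\frac32,\frac25)$, tangent at $u=\frac{3z}{5},\frac{z}{5}$, gives $\frac{4}{25}>\frac{81}{512}$.
\end{itemize}
A correct argument has to carry $b_1$ (equivalently $\delta_1$) through the optimization. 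That changes the characterization of $t_1$, and likewise of $t_2$ in the one-line case, where it becomes half of that line's intersection abscissa with $v=-u$ rather than $z/2$.
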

\subsection{$f$-DP approximation from above}
When approximating $f$ from above, it can be shown that the first affine piece must intersect $f$ at $(0,f(0))$ and the second at $(c,f(c)) = (c,c)$. It thus suffices to pick the angular point $t^* \in [0,c]$ optimally. The detailed proof is in \ifdefined\extended
Appendix \ref{subsec:proof_approx_above}.
\else
  \cite{arxivextended}.
\fi
\begin{proposition}%[Approximation from above]
\label{prop:approx_above}
Let $f$ satisfy Assumption \ref{assumption:approx}. Denote by $t^* \in (0,c)$ the only solution of $f'(t) = \frac{f(c)-f(0)}{c-0}.$ %Let $h(t) = tf(0) + c(f(t)-f(0)-t)$.
If $h(t^*) :=  t^*f(0) + c(f(t^*)-f(0)-t^*)\geq 0$, $f_{\boldeps^+,\bolddelta^+} = f_{\eps^+,\delta^+}$ with\[
\eps^+ = \ln\br{\frac{f(0)-c}{c}}, \ \delta^+ = 1-f(0).
\]
If $h(t^*) < 0$, $f_{\boldeps^+,\bolddelta^+}$ is given by
\[
\eps_1^+ = \ln\br{\frac{f(0)-f(t^*)}{t^*}}, \ \delta_1^+ = 1-f(0), 
\]
\[
\eps_2^+ = \ln\br{\frac{c-f(t^*)}{t^*-c}}, \ \delta_2^+ = 1-c\br{1+e^{\eps_2^+}}.
\]
\end{proposition}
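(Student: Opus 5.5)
We need to find the double-DP trade-off function $f_{\boldeps,\bolddelta}\geq f$, pointwise minimal in $L^1$, under Assumption \ref{assumption:approx}. The plan is to parametrize it by its affine pieces, pin down the interpolation points, and reduce to a one-dimensional optimization.

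\textbf{Step 1: structure of $f_{\boldeps,\bolddelta}$.} On $[0,c]$, $f_{\boldeps,\bolddelta}$ is the maximum of at most two decreasing affine pieces $1-\delta_i-e^{\eps_i}t$ (and $0$). The other pieces $e^{-\eps_i}(1-\delta_i-t)$ are the reflections through $y=x$. Since $f$ is symmetric, $f_{\boldeps,\bolddelta}\geq f$ on $[0,c]$ is equivalent to $f_{\boldeps,\bolddelta}\geq f$ on $[0,1]$. The same symmetry shows the integral over $[0,1]$ is determined by the part on $[0,c]$. So we reduce to finding a convex piecewise affine function $\ell$ with at most two pieces on $[0,c]$. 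We require $\ell\ge f$ there, $\ell$ crossing the diagonal at or beyond $(c,c)$, and $\int_0^c(\ell-f)$ minimal. A symmetric extension of $\ell$ is admissible (a valid double-DP trade-off function) provided the slopes are at most $-1$. This holds because $f'\le -1$ on $[0,c]$, by symmetry and convexity.

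\textbf{Step 2: the interpolation constraints.} Since $f$ is strictly convex, any affine piece lying above $f$ on an interval can be lowered or rotated until it touches $f$ at the interval's endpoints; this strictly decreases the integral. I will argue that $\ell(0)\ge f(0)$ and $\ell(c)\ge c$ must both be tight. If $\ell(0)>f(0)$, rotating the first piece down about its other contact point lowers the area while keeping $\ell\ge f$, by convexity. Similarly at $c$, where we also need the diagonal constraint so that the symmetric extension remains $\ge f$. Then each optimal $\ell$ is the chordal interpolation of $f$ at nodes $0<t<c$. That is, $\ell_t$ consists of the chord from $(0,f(0))$ to $(t,f(t))$ and the chord from $(t,f(t))$ to $(c,c)$. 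The degenerate case is the single chord from $(0,f(0))$ to $(c,c)$. The corresponding $(\eps_i,\delta_i)$ are read off from the slopes and intercepts: $e^{\eps_1}=(f(0)-f(t))/t$, $\delta_1=1-f(0)$, and $e^{\eps_2}=(f(t)-c)/(c-t)$, $\delta_2=1-c(1+e^{\eps_2})$. These match the stated formulas.

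\textbf{Step 3: optimize over $t$.} The area under $\ell_t$ on $[0,c]$ is the sum of two trapezoids, $A(t)=\tfrac{t}{2}(f(0)+f(t))+\tfrac{c-t}{2}(f(t)+c)$. Differentiating gives $A'(t)=\tfrac12(f(0)-c)+\tfrac c2 f'(t)$, which vanishes exactly when $f'(t)=(c-f(0))/c$. This is the stated $t^*$, which is unique by strict convexity. Since $A''=\tfrac c2f''>0$, $t^*$ is a minimum over interior $t$.

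\textbf{Step 4: admissibility, the main obstacle.} The two-piece interpolant is a legitimate double-DP function only if the resulting $\delta_2\ge 0$, equivalently the second line meets $t=0$ below $1$, and the pieces are genuinely active (Assumption \ref{assumption:main_thm}-type conditions). I expect $h(t^*)\geq0$ to be, after algebra, exactly the condition under which the interpolant at $t^*$ is inadmissible or collapses. Writing $\ell_{t^*}(0)$ for the second chord extended gives $c+ (c-f(t^*))\,c/(c-t^*)$, and comparing with $1$ and with $f(0)$ should produce $h$. In that case the constrained optimum moves to the boundary, i.e.\ the single chord from $(0,f(0))$ to $(c,c)$. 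This gives $e^{\eps^+}=(f(0)-c)/c$ and $\delta^+=1-f(0)$. Verifying this equivalence with $h$, and checking that no other boundary configuration beats it, is the step I would do most carefully.
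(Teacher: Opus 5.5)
Your Steps 1--3 are the paper's argument: the same reduction to $[0,c]$, the same interpolation through $(0,f(0))$, $(t,f(t))$, $(c,c)$, and your trapezoid area $A(t)$ is the paper's $J$, with the same $A'$ and $A''=\frac{c}{2}f''$. The gap is Step 4. It is the only part that addresses the case split in the statement, and you leave it as an expectation. It also contains a sign error: the second chord extended to $t=0$ is $\ell_t(0)=c+c\,\frac{f(t)-c}{c-t}=c(1+e^{\eps_2})$, not $c+c\,\frac{c-f(t)}{c-t}$.

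The missing observation is what $h$ actually measures. The paper uses $h(t)=2(A(t)-A(0))$, where $A(0)=A(c)=\frac{c}{2}(f(0)+c)$ is the area under the single chord from $(0,f(0))$ to $(c,c)$. Your intercept idea works too once the sign is fixed: $h(t)=(c-t)(\ell_t(0)-f(0))$. So $h(t^*)\ge 0$ iff $\delta_2\le\delta_1$, i.e. iff the kink at $t^*$ is non-convex and the interpolant is not a double-DP function, as you guessed. Either identity gives $h(t)=c\bigl(f(t)-f(0)-\frac{c-f(0)}{c}\,t\bigr)$, which is $c$ times the gap between $f$ and that chord. It is therefore strictly negative on $(0,c)$ by strict convexity. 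Your Step 3 already implies this: $A$ is strictly convex on $[0,c]$ with an interior critical point, so $A(t^*)<A(0)=A(c)$.

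Consequently, under Assumption~\ref{assumption:approx} the branch $h(t^*)\ge 0$ never occurs. The two-piece interpolant at $t^*$ is always admissible. Its slopes are $<-1$ because $f'<-1$ on $(0,c)$, and $\ell_{t^*}(0)<f(0)\le 1$ gives $\delta_2^+>\delta_1^+\ge 0$. It also always strictly beats the single chord. So there is no boundary optimum to analyze and no other configuration to rule out; the first case of the proposition holds vacuously, and would only matter if $f$ were affine on $[0,c]$. Replacing your Step 4 with this one-line comparison completes the proof.
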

%Again due to space constraints, we defer the proof to \cite{arxivextended}.

\subsection{Example: Gaussian Mechanism}
%We illustrate our approximation method on $G_\mu$-DP mechanisms as introduced in \cite{fdp}. 
\begin{figure}[tb]
    \centering
    \includegraphics[width=0.8\linewidth]{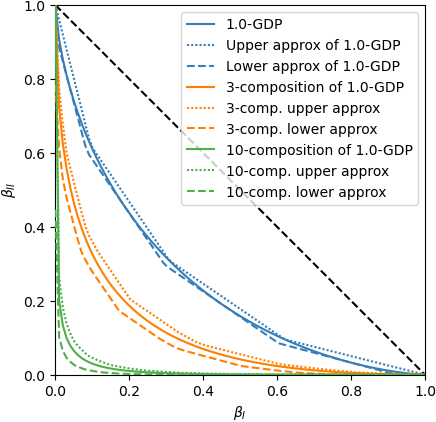}
    \caption{Privacy region of Gaussian mechanism with $\mu=1$ ($G_1$-DP) under $k$-fold composition for $k \in \{3,10\}$. The exact composition of $k$ $G_\mu$-DP mechanisms is $G_{\mu\sqrt{k}}$-DP \cite{fdp}. Lower and upper approximations are computed according to Propositions \ref{prop:approx_below} and \ref{prop:approx_above}, through double-DP composition in Theorems \ref{thm:main} and \ref{thm:alt-main}. Observe that the best approximation from below of $G_1$ is not a DP-TV trade-off function, improving on the approximation in \cite{dptv}.
    %Approximating $G_1$ and the $3$ \& $10$-compositions of $G_1$
    }
    \label{fig:gaussian_approx}
\end{figure}
In Figure \ref{fig:gaussian_approx}, we obtain empirically close lower and upper bounds on $1$-GDP compositions through double-DP, which will become arbitrarily tight once we generalize our results to $n$-DP composition, as $n$ increases.

\section{Concluding Remarks} 
\label{sec:conclusion}

In our work, we propose new composition theorems for heterogeneous composition and double-DP constraint composition. We illustrate the intimate relationship between these settings and leverage our results to approximate $f$-DP composition regions. 

Note that FFT based methods \cite{koskela2020computing,gopi2021numerical} can already achieve arbitrarily close bounds, but are harder to use and more computationally expensive. 

Building on the framework introduced in Theorem \ref{thm:main}, the generalization to $n$ DP constraints seems natural and will be the subject of future work. With $\boldsymbol{j} = (j_1, \ldots, j_n) \in (\mathbb{N}^*)^n$, we refer to the privacy region of the heterogeneous composition of $j_i$ $\eps_i$-DP for $i \in \Iintv{1,n}$ by $\regheterdpsingleparam{\boldeps}{\boldsymbol{j}}$. Using Corollary \ref{corollary:mixture} and our insight resulting from the case $n=2$ in Theorem \ref{thm:main}, we conjecture that we can represent $\regcomposedp{\boldeps}{\bolddelta}{k}$ as a convex combination of heterogeneous composition regions $\regheterdpsingleparam{\boldeps}{\boldsymbol{j}}$ and $\regdp{0}{1}$.

%\begin{conjecture}%[Privacy region of the composition of $n$-DP constrained mechanisms]
%Let $\boldeps = (\eps_i)_{i=1}^n \in (\R^*_+)^n$ and $\bolddelta = (\delta_i)_{i=1}^n \in [0,1]^n$ such that all $(\eps_i,\delta_i)$-DP constraints are active. Then the privacy region of the composition of $k$ mechanisms which are $(\eps_i, \delta_i)$-DP for $i \in \Iintv{1,n}$ is\begin{align*}
%\regcomposedp{\boldeps}{\bolddelta}{k} = \Tilde{\delta} \regdp{0}{1} + (1-\Tilde{\delta}) \sum_{\substack{\boldsymbol{j} \in \Iintv{0,k}^n\\ \sum_{i=1}^n j_i = k}} \omega_{\boldsymbol{j}}\regheterdpsingleparam{\boldeps}{\boldsymbol{j}}
%\end{align*}
%for $\omega_{\boldsymbol{j}} = \binom{k}{j_1,\ldots,j_n}\prod_{i=1}^n\alpha_i^{j_i}$ for some functions $\alpha_i$, $\Tilde{\delta}$ of $\boldeps$, $\bolddelta$.
%\end{conjecture}

\newpage

%%%%%%%%%%%%%%%%%%%%%%%%%%%%%%%
% DO NOT TOUCH THE LINES BELOW IF WHAT YOU WANT TO DO IS HIDE THE APPENDIX. SEE LINE 82, JUST ABOVE \begin{document} INSTEAD.
\ifdefined\extended
    \appendix
    \subsection{Proof of Lemma \ref{lem:mixture}}\label{subsec:mixture_lemma_proof}
Let $I$ be the index of the randomly selected binary hypothesis test, i.e. a random variable supported on $\Iintv{1,n}$ such that $P_I(i) = \alpha_i$. Observe that, by the mixture definition of $\calH_m$, for $x \in \cbr{0,1}$,\[
\cbr{\calH_m = x} \cap \cbr{I = i} = \cbr{\calH^i = x}.
\]

First, we compute $\calH_m$'s type-I and type-II errors in terms of the $n$ hypothesis tests' errors. Given a decision rule $\hat{\calH}_m:\mc{Y} \times \Iintv{1,n} \to \cbr{0,1}$, let $\hat{\calH}^i(y) = \hat{\calH}_m(y,i)$, we compute using the law of total probability
\begin{align*}
\beta_\text{I}^m(\hat{\calH}_m) &= \bbP[\hat{\calH}_m(Y,I) = 1|\calH_m = 0]\\
&= \sum_{i=1}^n P_I(i)\bbP[\hat{\calH}_m(Y,i) = 1|\calH_m = 0, I=i] \\
&=\sum_{i=1}^n \alpha_i \bbP[\hat{\calH}^i(Y) = 1|\calH^i= 0]\\
&= \sum_{i=1}^n \alpha_i \beta_\mathrm{I}^i(\hat{\calH}^i).
\end{align*}
Similarly, by swapping the roles of the hypotheses in each test, it follows from the same computation that
\begin{equation*}
\beta_\text{II}^m(\hat\calH_m) = \sum_{i=1}^n \alpha_i  \beta_\mathrm{II}^i(\hat\calH^i).
\end{equation*}
Then we proceed to compute the trade-off function of $\calH_m$.
\begin{align}
f_m(t) &= \inf_{\substack{\hat{\calH}_m: \mathcal{Y}\times\Iintv{1,n} \to \cbr{0,1}}} \cbr{\beta_\text{II}^m(\hat{\calH}_m)\ | \  \beta_\text{I}^m(\hat{\calH}_m) \leq t} \nonumber \\
&= \inf_{\substack{\hat{\calH}^i: \mathcal{Y} \to \cbr{0,1} \\ i \in \Iintv{1,n}}} \cbr{\sum_{i=1}^n \alpha_i \beta_\text{II}^i(\hat{\calH}^i) \ | \ \sum_{i=1}^n \alpha_i \beta_\text{I}^i(\hat{\calH}^i) \leq t} \label{beta_IIs} \\
&= \inf_{\substack{\hat{\calH}^i: \mathcal{Y} \to \cbr{0,1} \\ i \in \Iintv{1,n}}}  \cbr{\sum_{i=1}^n \alpha_i f_i(\beta_\text{I}^i(\hat{\calH}^i)) \ | \ \sum_{i=1}^n \alpha_i \beta_\text{I}^i(\hat{\calH}^i) \leq t}. \label{tradeoffs}
\end{align}
The last step follows by observing that $\beta_\text{II}^i(\hat{\calH}^i) \geq f_i(\beta_\text{I}^i(\hat{\calH}^i))$ for $i \in \Iintv{1,n}$ and all $\hat{\calH}^i: \mathcal{Y} \to \cbr{0,1}$, hence (\ref{beta_IIs}) $\geq$ (\ref{tradeoffs}). To see (\ref{beta_IIs}) $\leq$ (\ref{tradeoffs}): by the definition of the infimum, $\forall \eps > 0 \ \exists \hat{\calH}^i: \mathcal{Y} \to \cbr{0,1}$ such that $f_i(\beta_\text{I}^i(\hat{\calH}^i)) \geq \beta_\text{II}^i(\hat{\calH}^i) - \eps$, thus
\begin{align*}
    \inf_{\substack{\hat{\calH}^i:\ \mathcal{Y} \to \cbr{0,1} \\ i \in \Iintv{1,n}}} &\cbr{\sum_{i=1}^n \alpha_i f_i(\beta_\text{I}^i(\hat{\calH}^i)) \ | \ \sum_{i=1}^n \alpha_i \beta_\text{I}^i(\hat{\calH}^i) \leq t}\\ &\geq \\
    \inf_{\substack{\hat{\calH}^i:\ \mathcal{Y} \to \cbr{0,1} \\ i \in \Iintv{1,n}}} &\cbr{\br{\sum_{i=1}^n \alpha_i \beta_\text{II}^i(\hat{\calH}^i)} -\eps\ | \ \sum_{i=1}^n \alpha_i \beta_\text{I}^i(\hat{\calH}^i) \leq t} \label{beta_IIs}.
\end{align*}
Letting $\eps$ go to 0 yields (\ref{tradeoffs}) $\geq$ (\ref{beta_IIs}).

Lastly, by noting that the set over which the mixture of trade-off functions is minimized above depends on $\hat{\calH}^i$ only through $\beta_\text{I}^i(\hat{\calH}^i)$, we set $\beta_\text{I}^i(\hat{\calH}^i) = t_i$ for $i \in \Iintv{1,n}$ and the trade-off function becomes
\begin{equation*}
    f_m(t) =  \inf_{\substack{t_i \in [0,1], \ i \in \Iintv{1,n}\\ \sum_{i=1}^n \alpha_i t_i \leq t}} \sum_{i=1}^n \alpha_i f_i(t_i).
\end{equation*}
The infimum is taken over all $t_i \in [0,1]$ as $\beta_\text{I}^i(\hat{\calH}^i)$ can take all values in $[0,1]$ due to the possible non-determinism of $\hat{\calH}^i$.

The trade-off functions $f_i$ are continuous as argued in \cite{fdp}, thus the function of $n$ variables $g_\alpha(t_1, \ldots, t_n) = \sum_{i=1}^n \alpha_i f_i(t_i)$ is continuous over the set being minimized upon, which is $S_\alpha = \cbr{(t_1,\ldots,t_n) \in [0,1]^n | \sum_{i=1}^n \alpha_it_i \leq t}.$ Observe that $S_\alpha$ is a closed set, because it is the pre-image of the closed set $[0,t]$ by the continuous function $k_\alpha(t_1,\ldots,t_n) = \sum_{i=1}^n \alpha_i t_i$. Also, $S_\alpha$ is bounded as $S_\alpha \subseteq [0,1]^n$. Hence the function $g_\alpha$ attains its minimum by the intermediate value theorem. In other words, the inf is a min.
\begin{equation*}
    f_m(t) =  \min_{\substack{t_i \in [0,1], \ i \in \Iintv{1,n}\\ \sum_{i=1}^n \alpha_i t_i \leq t}} \sum_{i=1}^n \alpha_i f_i(t_i).
\end{equation*}
Next, we can restrict the search space to the $(t_1,\ldots,t_n)$ such that $\sum_{i=1}^n \alpha_it_i = t$. Indeed, assume the min is achieved by $(t_1,\ldots,t_n)$ such that $\sum_{i=1}^n \alpha_it_i := t^* < t$, which can only happen if $t > 0$. Also assume $\alpha_i \neq 0$, as otherwise, the $i^\text{th}$ hypothesis test can be ignored. Then, because $f_i$ are non-increasing by Proposition 2 of \cite{fdp}, the min value is also achieved by $c_\lambda = \br{c_{\lambda_i} := t_i + \frac{\lambda_i}{\alpha_i}(t-t^*)}_{i=1}^n \in [0,1]^n$ with $\lambda = (\lambda_i)_{i=1}^n \in [0,1]^n$ chosen such that $c_\lambda \in [0,1]^n$ and $\sum_{i=1}^n \lambda_i=1$ - we show the existence of such a $\lambda$ later. This is because we are adding a non-negative quantity to each coordinate of the arguments of $f_i$, and as such, since they are non-increasing, they will remain at their minimal values. Thus, $\sum_{i=1}^n \alpha_i f_i(c_{\lambda_i}) = \sum_{i=1}^n \alpha_i f_i(t_i) = \min_{\substack{t_i' \in [0,1], \ i \in \Iintv{1,n}\\ \sum_{i=1}^n \alpha_i t_i' \leq t}} \sum_{i=1}^n \alpha_i f_i(t_i').$ Also,
\begin{align*}
    \sum_{i=1}^n \alpha_i c_{\lambda_i} &= \sum_{i=1}^n\alpha_i t_i + \alpha_i\frac{\lambda_i}{\alpha_i}(t-t^*)\\ &= \br{\sum_{i=1}^n \alpha_i t_i} + \sum_{i=1}^n \lambda_i (t-t^*)\\
    &= t^* + t-t^*\\
    &= t.
\end{align*}
Hence the constraint $\sum_{i=1}^n \alpha_i t_i' \leq t$ can become the more specific constraint $\sum_{i=1}^n \alpha_i t_i' = t$ in the minimization problem defining $f_m(t)$.

A $\lambda \in [0,1]^n$ such that $c_\lambda \in [0,1]^n$ exists. To show this, we simplify the constraints on $\lambda$.
\begin{align*}
&\exists \lambda \begin{cases}
    t_i + \frac{\lambda_i}{\alpha_i}(t-t^*) \in [0,1]  &i \in \Iintv{1,n}\\
    \lambda_i \in [0,1] &i \in \Iintv{1,n}\\
    \sum_{i=1}^n \lambda_i = 1
\end{cases}\\ \iff &\exists \lambda \begin{cases}
    \lambda_i \in \left[-\alpha_i\frac{t_i}{t-t^*},\alpha_i\frac{1-t_i}{t-t^*}\right]  &i \in \Iintv{1,n}\\
    \lambda_i \in [0,1] &i \in \Iintv{1,n}\\
    \sum_{i=1}^n \lambda_i = 1
\end{cases} \\
\iff &\exists \lambda \begin{cases}
    \lambda_i \in \left[0,\alpha_i\frac{1-t_i}{t-t^*}\right]  &i \in \Iintv{1,n}\\
    \lambda_i \in [0,1] &i \in \Iintv{1,n}\\
    \sum_{i=1}^n \lambda_i = 1
\end{cases} \\
\iff &\exists \lambda \begin{cases}
    \lambda_i \in \left[0,\min\br{1,\alpha_i\frac{1-t_i}{t-t^*}}\right]  &i \in \Iintv{1,n}\\
    \sum_{i=1}^n \lambda_i = 1.
\end{cases} 
\end{align*}
Let $\mc{S} = \prod_{i=1}^n \left[0,\min\br{1,\alpha_i\frac{1-t_i}{t-t^*}}\right]$. Define $\sigma: \mc{S} \to \bbR$ such that $\sigma(\lambda) = \sum_{i=1}^n \lambda_i$. Also let $\lambda^* = \br{\min\br{1,\alpha_i\frac{1-t_i}{t-t^*}}}_{i=1}^n \in \mc{S}.$
\begin{itemize}
    \item $\sigma(0^n) = 0.$
    \item If $\exists j \in \Iintv{1,n}$ such that $\min\br{1,\alpha_j\frac{1-t_j}{t-t^*}} = 1$, then $\sigma(\lambda^*) = \sum_{i=1}^n \min\br{1,\alpha_i\frac{1-t_i}{t-t^*}} := \sigma_1 \geq \min\br{1,\alpha_j\frac{1-t_j}{t-t^*}} = 1$.
    \item Else, 
    \begin{align*}
        \sigma(\lambda^*) &= \sum_{i=1}^n \alpha_i\frac{1-t_i}{t-t^*} \\
        &=  \frac{\sum_{i=1}^n \alpha_i-\alpha_i t_i}{t-t^*}\\ &= \frac{1 - t^*}{t-t^*} := \sigma_2 \geq 1
    \end{align*}
    because $1\geq t \implies 1-t^* \geq t-t^*$.
\end{itemize}
Then, $\sigma$ is a continuous function of closed and bounded domain $\mc
S = \prod_{i=1}^n \left[0,\min\br{1,\alpha_i\frac{1-t_i}{t-t^*}}\right]$. By the intermediate value theorem, since $\sigma(0) = 0$ and $\sigma(\lambda^*) \geq \min(\sigma_1,\sigma_2) \geq 1$, there exists $\lambda \in \mc{S}$ such that $\sigma(\lambda) = 1$ since $1 \in [\sigma(0), \sigma(\lambda^*)]$. 

\subsection{Proof of Corollary \ref{corollary:mixture}}\label{subsec:corollary_mixture_proof}
We proceed by double inclusion.
\subsubsection{$\sum_{i=1}^n\alpha_i \regtradeoffindex{f}{i}  \subseteq \regtradeoffindex{f}{m}$}
Let $\beta^i = \begin{bmatrix}
    \beta_\mathrm{I}^i \\ \beta_\mathrm{II}^i
\end{bmatrix} \in \regtradeoffindex{f}{i}$, $i \in \Iintv{1,n}$. Letting $\beta_\mathrm{I}^m = \sum_{i=1}^n \alpha_i \beta_\mathrm{I}^i$, first observe that
\begin{equation*}
1-\beta_\mathrm{I}^m = 1-\br{\sum_{i=1}^n \alpha_i \beta_\mathrm{I}^i}\geq \beta_\mathrm{II}^m := \sum_{i=1}^n \alpha_i \beta_\mathrm{II}^i \geq \sum_{i=1}^n \alpha_i f_i(\beta_\mathrm{I}^i).
\end{equation*}
This is due to the definition of $\regtradeoffindex{f}{i}$, $1-\beta_\mathrm{I}^i \geq \beta_\mathrm{II}^i \geq f_i(\beta_\mathrm{I}^i)$ for $i \in \Iintv{1,n}$. 

By the previous lemma we have that
\begin{equation*}
    \beta_\mathrm{II}^m \geq \sum_{i=1}^n \alpha_i f_i(\beta_\mathrm{I}^i)\geq f_m(\beta_\mathrm{I}^m).
\end{equation*}
Indeed, $(\beta_\mathrm{I}^i)_{i=1}^n$ is a feasible solution of the minimization defining $f_m(\beta_\mathrm{I}^m)$.\\
Writing $\beta^m = \begin{bmatrix}
    \beta_\mathrm{I}^m \\ \beta_\mathrm{II}^m
\end{bmatrix}$, by construction $\sum_{i=1}^n\alpha_i \beta^i = \beta^m$. $f_m(\beta_\mathrm{I}^m) \leq \beta_\mathrm{II}^m \leq 1-\beta_\mathrm{I}^m$ as shown above. Thus, $\beta^m = \sum_{i=1}^n\alpha_i \beta^i \in \regtradeoffindex{f}{m}.$

\subsubsection{$\regtradeoffindex{f}{m} \subseteq \sum_{i=1}^n\alpha_i \regtradeoffindex{f}{i}$}
Let $\beta^m = \begin{bmatrix}
    \beta_\mathrm{I}^m \\ \beta_\mathrm{II}^m
\end{bmatrix} \in \regtradeoffindex{f}{m}$. \begin{enumerate}[i.]
    \item If $\beta^m$ is on the lower boundary of $\regtradeoffindex{f}{m}$, i.e. $\beta_\mathrm{II}^m = f_m(\beta_\mathrm{I}^m)$, then by the previous lemma, since $f_m(\beta_\mathrm{I}^m) = \min \cbr{\sum_{i=1}^n \alpha_i f_i(\beta_\mathrm{I}^i) \ | \ \sum_{i=1}^n \alpha_i \beta_\mathrm{I}^i = \beta_\mathrm{I}^m, \beta_\mathrm{I}^i \in [0,1] }$, there exists $(\beta_\mathrm{I}^i)_{i=1}^n \in [0,1]^n$ such that $\sum_{i=1}^n \alpha_i \beta_\mathrm{I}^i = \beta_\mathrm{I}^m$ and $\sum_{i=1}^n \alpha_i f_i(\beta_\mathrm{I}^i) = \beta_\mathrm{I}^m = f_m(\beta_\mathrm{I}^m)$. Thus \begin{equation*}
        \begin{bmatrix}
    \beta_\mathrm{I}^m \\ f_m(\beta_\mathrm{I}^m)
\end{bmatrix} = \sum_{i=1}^n \alpha_i \underbrace{\begin{bmatrix}
    \beta_\mathrm{I}^i \\ f_i(\beta_\mathrm{I}^i)
\end{bmatrix}}_{\in \regtradeoffindex{f}{i}} \in \sum_{i=1}^n \alpha_i \regtradeoffindex{f}{i}.
    \end{equation*}
    \item If $\beta^m$ is on or above the highest of the lower boundaries of $\cbr{\regtradeoffindex{f}{i}}_{i=1}^n$, i.e. $\beta_\mathrm{II}^m \geq \max\br{f_i(\beta_\mathrm{I}^m)}_{i=1}^n$, then $\beta^m$ is in $\cap_{i=1}^n \regtradeoffindex{f}{i}$. Thus $\beta^m = \sum_{i=1}^n \alpha_i \beta^m \in \sum_{i=1}^n \alpha_i \regtradeoffindex{f}{i}.$
    \item If $\beta^m$ sits between the lower boundary of $\regtradeoffindex{f}{m}$ and the highest of the lower boundaries of $\cbr{\regtradeoffindex{f}{i}}_{i=1}^n$, i.e. $f_m(\beta_\mathrm{I}^m) \leq \beta_\mathrm{II}^m \leq \max\br{f_i(\beta_\mathrm{I}^m)}_{i=1}^n = f_{i^*}(\beta_\mathrm{I}^m)$ with $i^* \in \arg\max_{i\in\Iintv{1,n}} f_i(\beta_\mathrm{I}^m)$, observe that in this case $\beta^m$ sits in the vertical segment \begin{equation*}
        S = \cbr{\lambda\begin{bmatrix}
    \beta_\mathrm{I}^m \\ f_m(\beta_\mathrm{I}^m)
\end{bmatrix} + (1-\lambda)\begin{bmatrix}
    \beta_\mathrm{I}^m \\ f_{i^*}(\beta_\mathrm{I}^m)
\end{bmatrix} \ | \ \lambda \in [0,1]}.
    \end{equation*}
    In the two previous points we have shown that the two ends of the segment $S$ are elements of $\sum_{i=1}^n \alpha_i \regtradeoffindex{f}{i}$. $\sum_{i=1}^n \alpha_i \regtradeoffindex{f}{i}$ is a convex combination of convex sets $\regtradeoffindex{f}{i}$, thus $\sum_{i=1}^n \alpha_i \regtradeoffindex{f}{i}$ is convex itself \cite{boyd_convex_optimization}. Thus $S \subseteq \sum_{i=1}^n \alpha_i \regtradeoffindex{f}{i}$, in particular $\beta^m \in \sum_{i=1}^n \alpha_i \regtradeoffindex{f}{i}$.
\end{enumerate}
We have shown by case disjunction that $\beta^m \in \sum_{i=1}^n \alpha_i \regtradeoffindex{f}{i}$.

\subsection{Proof of Theorem \ref{thm:heter}}\label{subsec:proof_heter}
Define distributions $P_i^1$ on $\cbr{0,3}$ and $P_i^2$ on $\cbr{1,2}$: 
\begin{equation*}
     P_i^1(x) = \begin{cases}
        \frac{e^{\eps_1}}{e^{\eps_1} + 1} &\text{ if } (i = 0, x=0) \text{ or }(i=1, x= 3),\\
        \frac{1}{e^{\eps_1} + 1} &\text{ if } (i = 0, x=3) \text{ or }(i=1, x= 0),
    \end{cases}
\end{equation*}
and 
\begin{equation*}
    P_i^2(x) = \begin{cases}
        \frac{e^{\eps_2}}{e^{\eps_2} + 1} &\text{ if } (i = 0, x=1) \text{ or }(i=1, x= 2),\\
        \frac{1}{e^{\eps_2} + 1} &\text{ if } (i = 0, x=2) \text{ or }(i=1, x= 1).
    \end{cases}
\end{equation*}
For $j \in \cbr{1,2}$, let the mechanism $M^j$ output $X_0^j \sim P_0^j$ in the case of the null hypothesis, $X_1^j \sim P_1^j$ else. Thus $M^j$ is $\eps_j$-DP binary randomized response. Let $M^{x,y}$ be the composition of $x$ replicas of $M^1$ concatenated with $y$ replicas of $M^2$. Then $M^{x,y}$ outputs $X_0 = ((X_0^1)^x,(X_0^2)^y) \sim \Tilde{P}_0 = (P_0^1)^x(P_0^2)^y$ in the case of the null hypothesis, and $X_1 = ((X_1^1)^x,(X_1^2)^y) \sim \Tilde{P}_1 = (P_1^1)^x(P_1^2)^y$ else, where $\Tilde{P}_0$ and $\Tilde{P}_1$ are distributions on $\Iintv{0,3}^{x+y}$.\\
It remains to compute the privacy region of $M$ and to show that it is the largest composition region in this case. 
\subsubsection{Privacy region of $M$}\label{subsubsec:privacy_reg_heter}
Let $k = x+y$. Let $s^k \in \cbr{0,1,2,3}^k$, and define
\begin{itemize}
    \item $a := a(s^k) = \abs{\cbr{i \  | \ s_i = 0}}$,
    \item $b := b(s^k) = \abs{\cbr{i \  | \ s_i = 1}}$,
    \item $c := c(s^k) = \abs{\cbr{i \  | \ s_i = 2}}$,
    \item $d := d(s^k) = \abs{\cbr{i \  | \ s_i = 3}}$.
\end{itemize}
Thus, we have that $(a,b,c,d) \in \Iintv{0,k}^4$ and $a+b+c+d = k$. Additionally, $x = a+d$, $y=b+c$.\\

For $s^k \in \cbr{0,1,2,3}^k$ such that $(s_1,\ldots,s_x) \in \cbr{0,3}^x$ and $(s_{x+1}, \ldots, s_{x+y}) \in \cbr{1,2}^y$, we compute the likelihoods $\Tilde{P}_j(s^k)$:\begin{align*}
\Tilde{P}_0(s^k) &= \br{\prod_{i=1}^x P_0^1(s_i)} \br{\prod_{i=1}^{y} P_0^2(s_{x+i})}\\
&=  \br{\frac{1}{e^{\eps_1}+1}}^{x}e^{a\eps_1}\br{\frac{1}{e^{\eps_2}+1}}^{y}e^{b\eps_2}\\
\Tilde{P}_0(s^k)&= \br{\frac{1}{e^{\eps_1}+1}}^{x}\br{\frac{1}{e^{\eps_2}+1}}^{y}e^{a\eps_1+b\eps_2},
\end{align*}
and similarly\[
\Tilde{P}_1(s^k) = \br{\frac{1}{e^{\eps_1}+1}}^{x}\br{\frac{1}{e^{\eps_2}+1}}^{y}e^{d\eps_1+c\eps_2}.
\]
Drawing inspiration from the proofs from \cite{compositiontheorem,dptv}, we are looking for all pairs $(\eps,\delta)$ such that, for the hypothesis testing problem $\calH(\Tilde{P}_0, \Tilde{P}_1)$,
\begin{align}
1-\Tilde{P}_0(A) = \beta_\mathrm{I} \geq -e^\eps \beta_\mathrm{II} + 1-\delta = -e^\eps \Tilde{P}_1(A) + 1-\delta\label{equ:hyptest_dp_base}
\end{align}
for any non-rejection set $A \subseteq \cbr{0,3}^x \times \cbr{1,2}^y$. For each $(\eps,\delta)$ found, the mechanism $M$ is $(\eps,\delta)$-DP as Equation \ref{equ:hyptest_dp_base} is the binary hypothesis testing characterization of $(\eps,\delta)$-DP \cite{compositiontheorem}. Since the privacy region of $M$ is convex, it is entirely defined by all such line-offset pairs $(\eps,\delta)$ \cite{compositiontheorem}.\\
Firstly, we determine all the possible slopes $e^\eps$. By the Neyman-Pearson lemma, the optimal tests for the hypothesis testing problem $\calH(\Tilde{P}_0, \Tilde{P}_1)$, minimizing $\beta_\mathrm{II}$ for a fixed maximum value of $\beta_\mathrm{I}$, are of the form $A(\eps)  = \cbr{s^k\ | \ \frac{\Tilde{P}_0(s^k)}{\Tilde{P}_1(s^k)} \geq e^\eps} = \cbr{s^k\ | \ {\Tilde{P}_0(s^k)}-e^\eps{\Tilde{P}_1(s^k)} \geq 0}$. We are thus interested in\[
    \frac{\tilde{P}_0(s^k)}{\tilde{P}_1(s^k)} = e^{\eps_1(a-d)+\eps_2(b-c)},
\]
looking for all the values $e^\eps$ this likelihood ratio can take. With $x=a+d$ and $y=b+c$, we can write
\begin{equation*}
    \frac{\tilde{P}_0(s^k)}{\tilde{P}_1(s^k)} = e^{\eps_1(a+d-2d)+\eps_2(b+c-2c)} = e^{\eps_1(x-2a^*)+\eps_2(y-2b^*)}
\end{equation*}
where $d:=a^* \in \Iintv{0,x}, c:=b^* \in \Iintv{0,y}$.\\
By the symmetry of DP, it is sufficient to keep the slopes $\eps_{a^*,b^*}^{x,y}$ with $\eps_{a^*,b^*}^{x,y} = \eps_1(x-2a^*)+\eps_2(y-2b^*) \geq 0$.\\
Secondly, we find the matching $\delta$ for each $\eps$. Rewrite \eqref{equ:hyptest_dp_base} as \[
1-\delta \leq \beta_\mathrm{I} + e^{\eps}\beta_\mathrm{II} = 1-\tilde{P}_0(A)+e^\eps\tilde{P}_1(A).
\]
For an optimal non-rejection set $A(\eps)$, the right-hand side is minimized and the inequality becomes an equality:\begin{align*}
1-\delta &= 1-\tilde{P}_0(A(\eps))+e^\eps\tilde{P}_1(A(\eps))\\
&= \min_{A \subseteq \cbr{0,3}^x \times \cbr{1,2}^y} 1-\tilde{P}_0(A)+e^\eps\tilde{P}_1(A),
\end{align*}
or equivalently,\[
\delta = \max_{A \subseteq \cbr{0,3}^x \times \cbr{1,2}^y} \tilde{P}_0(A)-e^\eps\tilde{P}_1(A).
\]
Consequently, for each slope $\eps_{a^*,b^*}^{x,y} = \eps_1(x-2a^*)+\eps_2(y-2b^*)$ such that $(a^*,b^*) \in S(\eps_1,\eps_2;x,y) = \cbr{(a^*,b^*) \ | \ \eps_1(x-2a^*)+\eps_2(y-2b^*) \geq 0} \in \Iintv{0,x} \times \Iintv{0,y}$, we can compute the corresponding offset $\delta_{a^*,b^*}^{x,y}$.
\begin{align*}
    \delta_{a^*,b^*}^{x,y} &= \max_{A \subseteq \cbr{0,3}^x \times \cbr{1,2}^y} \Tilde{P}_0(A) - e^{\eps_{a^*,b^*}^{x,y}} \Tilde{P}_1(A)\\
    &= \sum_{s^k \in \cbr{0,3}^x \times \cbr{1,2}^y}\max\br{0, \Tilde{P}_0(s^k) - e^{\eps_{a^*,b^*}^{x,y}}\Tilde{P}_1(s^k)}
\end{align*}
Observe that $\Tilde{P}_j$ depend on $s^k$ only through $a(s^k)$ and $b(s^k)$. Thus, if $a(s^k) = a$ and $b(s^k) = b$,\[
\Tilde{P}_j(s^k) = \Tilde{P}_j(0^a3^{x-a}1^b2^{y-b}) := \Tilde{P}_j(s^{a,b}).
\]
For a given pair $(a,b) \in \Iintv{0,x} \times \Iintv{0,y}$, there exists $\binom{x}{a}\binom{y}{b}$ sequences $s^k \in \cbr{0,3}^x \times \cbr{1,2}^y$ such that $a(s^k)=a$ and $b(s^k) = b$. Then, by grouping all $s^k$ with same values $a(s^k)$ and $b(s^k)$,
\[
     \delta_{a^*,b^*}^{x,y} =\sum_{b=0}^y\sum_{a=0}^x \binom{x}{a}\binom{y}{b}\max\br{0, \Tilde{P}_0(s^{a,b}) - e^{\eps_{a^*,b^*}^{x,y}}\Tilde{P}_1(s^{a,b})}.
\]
To simplify the max, observe that
\begin{align*}
    &\Tilde{P}_0(0^a3^{x-a}1^b2^{y-b}) - e^{\eps_{a^*,b^*}^{x,y}}\Tilde{P}_1(0^a3^{x-a}1^b2^{y-b}) \geq 0 \\
    \iff &e^{a\eps_1 + b\eps_2}-e^{\eps_{a^*,b^*}^{x,y}} e^{\eps_1(x-a) + \eps_2(y-b)} \geq 0\\
    \iff &e^{a\eps_1 + b\eps_2}- e^{\eps_1(x-2a^* + x-a) + \eps_2(y-2b^*+y-b)} \geq 0\\
    \iff &e^{a\eps_1 + b\eps_2}- e^{\eps_1(2(x-a^*)-a) + \eps_2(2(y-b^*)-b)} \geq 0\\
    \iff &a\eps_1 + b\eps_2\geq \eps_1(2(x-a^*)-a) + \eps_2(2(y-b^*)-b) \\
    \iff &2a\eps_1 + 2b\eps_2 \geq 2(x-a^*)\eps_1 + 2(y-b^*)\eps_2\\
    \iff &a\eps_1 + b\eps_2 \geq (x-a^*)\eps_1 + (y-b^*)\eps_2\\
    \iff &a \geq (y-b^*-b)\frac{\eps_2}{\eps_1} + (x-a^*).
\end{align*}
Consequently,
\begin{align}
  \delta_{a^*,b^*}^{x,y} &= \sum_{b=0}^y\sum_{a=a_0(b)}^x \binom{x}{a}\binom{y}{b} \Tilde{P}_0(s^{a,b}) - e^{\eps_{a^*,b^*}^{x,y}}\Tilde{P}_1(s^{a,b})\nonumber\\
  &=  \br{\frac{1}{e^{\eps_1}+1}}^x\br{\frac{1}{e^{\eps_2}+1}}^y \sum_{b=0}^y\sum_{a=a_0(b)}^x \binom{x}{a}\binom{y}{b}\cdot \nonumber\\ &\br{e^{a\eps_1 + b\eps_2}- e^{\eps_1(2(x-a^*)-a) + \eps_2(2(y-b^*)-b)}} \label{delta_a*b*}
\end{align}
where
\begin{equation*}
    a_0(b) = \max\br{0,\lceil (y-b^*-b)\frac{\eps_2}{\eps_1} + (x-a^*) \rceil}.
\end{equation*}
This was the last step to prove the correctness of Algorithm \ref{alg:het_comp}.

\subsubsection{Largest region}\label{subsubsec:converse_proof}
We again proceed similarly as to \cite{compositiontheorem,dptv}. Each of the first $x$ mechanisms given above, $\eps_1$-binary randomized response, achieves the full $\eps_1$-DP region, and the same goes for the last $y$ mechanisms that achieve the full $\eps_2$-DP region \cite{compositiontheorem}. Thus, by Theorem 10 of \cite{blackwell}, any set of $x$ $\eps_1$-DP mechanisms and $y$ $\eps_2$-DP mechanisms can be simulated through $x$ instances of $\eps_1$-binary randomized response and $\eps_2$-binary randomized response, and the same converse proof as \cite{compositiontheorem} can be followed.

\subsection{Proof of Theorem \ref{thm:main}, case $\delta_1 > 0$}\label{subsubsec:extension_delta_1}
Define
\begin{equation*}
    P_0(x) = \begin{cases}
        \delta_1 &\quad x=-1, \\
        (1-\delta_1)(1-\alpha)\frac{e^{\eps_1}}{e^{\eps_1} + 1} &\quad x=0,\\
        (1-\delta_1) \alpha \frac{e^{\eps_2}}{e^{\eps_2} + 1} &\quad x=1,\\
        (1-\delta_1) \alpha \frac{1}{e^{\eps_2} + 1} &\quad x=2,\\
        (1-\delta_1)(1-\alpha)\frac{1}{e^{\eps_1} + 1} &\quad x=3, \\
        0 &\quad x=4,
    \end{cases}
\end{equation*}
and
\begin{equation*}
    P_1(x) = \begin{cases}
        0 &\quad x=-1, \\
        (1-\delta_1) (1-\alpha)\frac{1}{e^{\eps_1} + 1} &\quad x=0, \\
        (1-\delta_1) \alpha \frac{1}{e^{\eps_2} + 1} &\quad x=1,\\
        (1-\delta_1) \alpha \frac{e^{\eps_2}}{e^{\eps_2} + 1} &\quad x=2,\\
        (1-\delta_1) (1-\alpha)\frac{e^{\eps_1}}{e^{\eps_1} + 1} &\quad x=3 \\
        \delta_1 &\quad x=4.
    \end{cases}
\end{equation*}
Thus, with the same notation as the case $\delta_1 = 0$, \[
    P_i(x) = \begin{cases}
            (1-\delta_1)(1-\alpha)P_i^1(x) &\text{ if } x \in\cbr{0,3},\\
        (1-\delta_1)\alpha P_i^2(x) &\text{ if } x \in\cbr{1,2},\\
        \delta_1P_i^3(x) &\text{ if } x \in\cbr{-1,4},
    \end{cases}
\]
where\[
P_0^3(x) = \mathbf{1}(x=-1), \ P_1^3(x) = \mathbf{1}(x=4).
\]
Hence, in this case, $M$ is the mixture of 3 mechanisms, picking $M^1$ with probability $(1-\delta_1)(1-\alpha)$, $M^2$ with probability $(1-\delta_1)\alpha$ and $M^3$ with probability $\delta_1$. If $M_k$, which is $M$ composed with itself $k$ times, picks $M^3$ once or more, then the hypothesis test has probability 0 of both type I and type II errors as the true hypothesis is revealed. The probability that $M_k$ never picks $M^3$ in $k$ tries is $1-(1-\delta_1)^k$, and the remaining probability weight $(1-\delta_1)^k$ distributes across a mixture of the same mechanisms as in the $\delta_1 = 0$ case.\\

The above computes the privacy region for the $k$-composition of $M$. It remains to show that this is the largest possible region. First, we note that $f(P_0, P_1) = \max\{f_{\eps_1, \delta_1}, f_{\eps_2, \delta_2}\}$. This can be seen by the Neyman-Pearson lemma and calculating the achievable error pairs for the decision rules 
\begin{equation*}
    \hatcalH_\tau(x) = \begin{cases}
        0 &\quad P_0(x)\geq \tau P_1(x), \\
        1 &\quad P_0(x) < \tau P_1(x),
    \end{cases}
\end{equation*}
for $\tau \geq 0$. The rest of the proof  follows similarly to section \ref{subsubsec:converse_proof} just above.

\subsection{Proof of Theorem \ref{thm:alt-main}}\label{subsec:proof_alt_main}
\subsubsection{Likelihood ratio introduction}
We first cover the case $\delta_1 = 0$, then one can extend to the case $\delta_1 > 0$ using the same machinery as in section \ref{subsubsec:extension_delta_1}. Consider again the same mechanism as introduced in the proof of Theorem \ref{thm:main}, written in full detail this time: $M$ outputs $X_i \sim P_i$ under hypothesis $\calH_i$, where, for $i \in \cbr{0,1}$,
\begin{equation*}
    P_0(x) = \begin{cases}
        (1-\alpha)\frac{e^{\eps_1}}{e^{\eps_1} + 1} &\quad x=0,\\
        \alpha \frac{e^{\eps_2}}{e^{\eps_2} + 1} &\quad x=1,\\
        \alpha \frac{1}{e^{\eps_2} + 1} &\quad x=2,\\
        (1-\alpha)\frac{1}{e^{\eps_1} + 1} &\quad x=3,
    \end{cases}
\end{equation*}
and
\begin{equation*}
    P_1(x) = \begin{cases}
        (1-\alpha)\frac{1}{e^{\eps_1} + 1} &\quad x=0, \\
        \alpha \frac{1}{e^{\eps_2} + 1} &\quad x=1,\\
        \alpha \frac{e^{\eps_2}}{e^{\eps_2} + 1} &\quad x=2,\\
        (1-\alpha)\frac{e^{\eps_1}}{e^{\eps_1} + 1} &\quad x=3.
    \end{cases}
\end{equation*}
Consider the $k$-composition of $M$ with itself, $M_k$. $M_k$ outputs $(X_i)^k \sim \Tilde{P}_i := P_i^k$ under the hypothesis $\calH_i$, $i \in \cbr{0,1}$. With the same notation as in \ref{subsubsec:privacy_reg_heter}, for $s^k \in \Iintv{0,3}^k$,
\begin{align*}
    \tilde{P}_0(s^k) &= \prod_{i=1}^k P_0(s_i) = P_0(0)^aP_0(1)^bP_0(2)^cP_0(3)^d\\ &=  \br{\frac{1-\alpha}{e^{\eps_1}+1}}^{a+d}\br{\frac{\alpha}{e^{\eps_2}+1}}^{b+c}e^{a\eps_1 + b\eps_2},
\end{align*}
and 
\begin{align*}
    \tilde{P}_1(s^k) &= \prod_{i=1}^k P_1(s_i) = P_1(0)^aP_1(1)^bP_1(2)^cP_1(3)^d\\ &=  \br{\frac{1-\alpha}{e^{\eps_1}+1}}^{a+d}\br{\frac{\alpha}{e^{\eps_2}+1}}^{b+c}e^{d\eps_1 + c\eps_2}.
\end{align*}
As in the computation of the privacy region in the proof \ref{subsubsec:privacy_reg_heter}, we are interested in the slope-offset achieved by the likelihood ratio $\frac{\Tilde{P}_0}{\Tilde{P}_1}$, where\begin{equation*}
    \frac{\tilde{P}_0(s^k)}{\tilde{P}_1(s^k)} = e^{\eps_1(a-d)+\eps_2(b-c)}.
\end{equation*}
\subsubsection{Set of slopes}
We seek to find the image of the function $f: A \to \R$, $f(a,b,c,d) =\eps_1(a-d)+\eps_2(b-c)$, for fixed values of $\eps_1, \eps_2$. Here, 
\begin{equation*}
    A = \cbr{(a,b,c,d) \in \Iintv{0,k}^4 \ | \ a+b+c+d = k}.
\end{equation*}
Note that it suffices to find the values $(a,b,c,d)$ for which $f(a,b,c,d) \geq 0$. Indeed, if $f(a,b,c,d) = \eps_1(a-d)+\eps_2(b-c) < 0$, then $f(d,c,b,a) = \eps_1(d-a)+\eps_2(c-b) = -\sbr{\eps_1(a-d)+\eps_2(b-c)} > 0$.

We are interested in the values that $f$ attains, i.e. in $\imag{f}$. First, let
\begin{align*}
    \Tilde{f}: \ \Tilde{A} &\to \R\\
    (a-d,b-c) &\mapsto \eps_1(a-d) + \eps_2(b-c),
\end{align*}
where
\begin{align*}
\Tilde{A} &= \cbr{(a-d, b-c) \ | \ \substack{(a,b,c,d) \in \Iintv{0,k}^4\\ \text{ and } a+b+c+d=k}}\\
&= \cbr{(a-d, b-c) \ | \ (a,b,c,d) \in A}.
\end{align*}
It is easy to see that $f(a,b,c,d) = \Tilde{f}(a-d, b-c)$ and $\imag{f} = \imag{\Tilde{f}}$.

We rewrite the set $\Tilde{A}$ as follows. Define
\begin{equation*}
    B = \cbr{(x,y) \in \mathbb{Z}^2 \ | \ |x|+|y| \leq k, \ x + y \cong k \text{ mod } 2}.
\end{equation*}
then let $g: B \to \R$ such that \begin{equation*}
    g(x,y) = \eps_1x + \eps_2y.
\end{equation*}
It so happens that $\Tilde{A} = B$ as we prove in the sequel, thus $\imag{f} = \imag{\Tilde{f}} = \imag{g}$. 

To prove $\Tilde{A} = B$ proceed by double inclusion.
\begin{enumerate}
    \item To prove $\Tilde{A} \subseteq B$: let $(a-d,b-c) \in \Tilde{A}$. Define $x = a-d$ and $y = b-c$. Then by triangular inequality
    \[
    |x|+|y| = |a-d| + |b-c| \leq |a| + |d|+|b|+|c| = k.
    \]
    Also, observe that $\forall z \in \mathbb{Z}$, $z \cong -z \text{ mod }2$. Thus
    \[
    x+y = a-d+b-c\cong a+d+b+c = k \text{ mod } 2.
    \]
    We conclude that $(a-d,b-c) = (x,y) \in B$.
    \item To prove $B \subseteq \Tilde{A}:$ let $(x,y) \in B$. 
    \begin{itemize}
        \item \underline{If $x \geq 0$ and $y \geq 0$}: $|x|+|y| = x+y \leq k$. Let $s = k-(x+y) \in \Iintv{0,k}$. Observe that $s \cong 0 \text{ mod }2.$ Indeed,
        \begin{align*}
        \begin{cases}
            x+y\cong k \text{ mod }2\\
            x+y+s=k
        \end{cases} &\implies \begin{cases}
        x+y\cong k \text{ mod }2\\
            x+y+s \cong k \text{ mod }2
        \end{cases}\\ &\implies  s\cong 0 \text{ mod }2.
        \end{align*}
        Let $s = 2q$, $q \in \Iintv{0,k}$. Then we can define $a = x+q$, $d = q$, $b = y$, $c = 0$, and we have $(x,y) = (a-d, b-c) \in A$. In particular, observe that $a \in \Iintv{0,k}$ since $0 \leq x+q = a \leq x+q+y+q = k$.
        \item \underline{If $x < 0$ and $y \geq 0$}: similarly, $|x|+|y| = y-x \leq k$. Define $s = k - y+x \geq 0$. By the same argument as before, $s \cong 0 \text{ mod }2$, as $-x \cong x \text{ mod }2$. Thus we can write $s = 2q$, $q \in \Iintv{0,k}$. Letting $d = |x|$, $a=0$, $b = y+q$ and $c = q$, we obtain the desired result. Here as well, observe in particular that $b \in \Iintv{0,k}$ as $0 \leq y+q = b$, then assume for the sake of contradiction that $y+q > k$. Then $y+q+q > k +q \iff y+s > k+q \iff k+x > k+q \iff x > q$, but $x < 0$ and $q \geq 0$, leading to a contradiction.
        \item The other cases can easily be deducted from these first two cases. For instance, when $x \leq 0$ and $y \leq 0$, we simply have to swap the roles of $a,d$ and $b,c$ in the first point.
    \end{itemize}
\end{enumerate}
Thus, we find 
\begin{align*}
    \text{Slopes} &= \imag{f} = \imag{g}\\ 
    &= \cbr{\eps_1 x + \eps_2 y \ | \ \substack{(x,y) \in \mathbb{Z}^2, \ |x|+|y| \leq k,\\ x+y\cong k \text{ mod } 2}}.
\end{align*}
From this, the next step is to see
\begin{equation*}
\text{Slopes} = \cbr{\eps_1(u+v-k) + \eps_2(u-v) \ | \ (u,v)\in \Iintv{0,k}^2}.
\end{equation*}
To prove this, we will show that the following function is a bijection:
\begin{align*}
    h: B &\to \Iintv{0,k}^2\\
      (x,y) &\mapsto (u,v) = \br{\frac{k+x+y}{2}, \frac{k+x-y}{2}}.
\end{align*}
It suffices to check that $h$ is well defined and that its inverse $h^{-1}(u,v) = (u+v-k, u-v)$ is well defined as well. To see this,
\begin{align*}
|x| +|y| \leq k &\implies  \begin{cases}
    -k \leq x+y \leq k\\
    -k \leq x-y \leq k
\end{cases}\\
&\implies \begin{cases}
    0\leq k+x+y \leq 2k\\
    0 \leq k+x-y \leq 2k
\end{cases}\\
&\implies \begin{cases}
    0\leq \frac{k+x+y}{2} \leq k\\
    0 \leq \frac{k+x-y}{2} \leq k.
\end{cases}
\end{align*}
These two fractions are integers because $x+y\cong x-y \cong k \text{ mod }2$. We can similarly show that $h^{-1}$ is well defined as follows. First, note that $(u+v-k)+(u-v) = 2u-k\cong -k \cong k \text{ mod }2$. Also,
\begin{itemize}
    \item If $u+v \geq k$, then \[
    |u+v-k| + |u-v| = \begin{cases}
        u+v-k+u-v\\ = 2u-k \leq k \text{ if } u \geq v,\\
        u+v-k+v-u \\= 2v-k \leq k \text{ if } v\geq u.
    \end{cases}
    \]
    \item If $u+v < k$, then \[
    |u+v-k| + |u-v| = \begin{cases}
        k-u-v+u-v\\ = k-2v \leq k \text{ if } u \geq v,\\
        k-u-v+v-u\\ = k-2u \leq k \text{ if } v\geq u.
    \end{cases}
    \]
\end{itemize}
Lastly,
\begin{align*}
h\br{h^{-1}(u,v)} &= \frac{1}{2}\begin{bmatrix}
    k+u+v-k+u-v\\
    k+u+v-k-u+v
\end{bmatrix}\\ 
&= \frac{1}{2}\begin{bmatrix}
  2u\\
  2v
\end{bmatrix} = (u,v),
\end{align*}
and
\begin{align*}
h^{-1}\br{h(x,y)} &= \begin{bmatrix}
    \frac{k+x+y}{2}+\frac{k+x-y}{2}-k\\
    \frac{k+x+y}{2}-\frac{k+x-y}{2}
\end{bmatrix}\\ &= \begin{bmatrix}
    k+x-k\\
    y
\end{bmatrix} = (x,y).
\end{align*}
Note that since $B$ is in bijection with $\Iintv{0,k}^2$, $|B| = |\Iintv{0,k}^2| = (k+1)^2$.
\subsubsection{Non-negative slopes and corresponding offsets}
From there,we find the set of \textit{non-negative} slopes,
\begin{equation*}
\cbr{\eps_1 (u+v-k) + \eps_2 (u-v) \ | \ \substack{ u,v \in \Iintv{0,k},\\ \ u \geq \left\lceil \frac{k\eps_1 - v(\eps_1 -\eps_2)}{\eps_1 + \eps_2}\right\rceil}}.
\end{equation*}
To see this, start from
\[
\text{Slopes} = \cbr{\eps_1(u+v-k) + \eps_2(u-v) \ | \ u,v \in \Iintv{0,k}}.
\]
We want to find all $u,v \in \Iintv{0,k}$ such that, rearranging,
\begin{equation*}
    \eps_1(u+v-k) + \eps_2(u-v) = u(\eps_1 +\eps_2) + v(\eps_1 - \eps_2) - k\eps_1 \geq 0.
\end{equation*}
Letting $v \in \Iintv{0,k}$, this imposes the following constraint on $u$:
\begin{equation*}
    u(\eps_1 + \eps_2) \geq k\eps_1 - v(\eps_1 -\eps_2) \iff u \geq \frac{k\eps_1 - v(\eps_1 -\eps_2)}{\eps_1 + \eps_2}.
\end{equation*}
To see this, observe that $\eps_1 \geq \eps_1 - \eps_2 $ hence $ v \eps_1 \geq v(\eps_1-\eps_2)$ and $k\eps_1 - v(\eps_1-\eps_2) \geq k \eps_1 - v\eps_1 = (k-v) \eps_1 \geq 0$ since $v \in \Iintv{0,k}$. Thus the constraint we found on $u$ is always active:
\begin{equation*}
    u \in \Iintv{\left\lceil \frac{k\eps_1 - v(\eps_1 -\eps_2)}{\eps_1 + \eps_2}\right\rceil, k}.
\end{equation*}
For each $\eps_{u,v} = \eps_1 (u+v-k) + \eps_2 (u-v) \geq 0$, we seek
\begin{equation*}
\delta_{u,v} = \max_{A \subseteq \Iintv{0,3}^k} \Tilde{P}_0(A) - e^{\eps_{u,v}}\Tilde{P}_1(A).
\end{equation*}
This can be written
\begin{equation*}
    \delta_{u,v} = \sum_{s^k \in \Iintv{0,3}^k} \max\br{0, \Tilde{P}_0(s^k) - e^{\eps_{u,v}}\Tilde{P}_1(s^k)}.
\end{equation*}
Observe, as in \cite{dptv}, that $\Tilde{P}_1(s^k)$ only depends on $s^k$ through the quantities $a,b,c,d$ defined above, i.e. the number of 0s, 1s, 2s and 3s, and not their placement. For a given tuple $(a,b,c,d) \in \Iintv{0,k}^4$ such that $a+b+c+d = k$, there are
\begin{equation*}
    \binom{k}{a} \binom{k-a}{b} \binom{k-(a+b)}{c} = \binom{k}{a,b,c,d}
\end{equation*}
sequences $s^k$ such that $a(s^k)=a, b(s^k) = b, c(s^k) = c$ and $d(s^k) = d.$  Hence, $\Tilde{P}_i(s^k) = \Tilde{P}_i(0^a1^b2^c3^d) := \Tilde{P}_i(s^{a,b,c,d})$.
\begin{align*}
    \delta_{u,v} = &\sum_{\substack{(a,b,c,d) \in \Iintv{0,k}^4 \\ a+b+c+d = k}} \binom{k}{a,b,c,d}\cdot\\&\max\br{0, \Tilde{P}_0(s^{a,b,c,d}) - e^{\eps_{u,v}}\Tilde{P}_1(s^{a,b,c,d})}.
\end{align*}
To simplify this computation, we seek to find $(a,b,c,d) \in \Iintv{0,k}^4$ with $a+b+c+d = k$ and
\begin{align*}
    &\Tilde{P}_0(0^a1^b2^c3^d) - e^{\eps_{u,v}}\Tilde{P}_1(0^a1^b2^c3^d)\\
    = &\br{\frac{1-\alpha}{e^{\eps_1}+1}}^{a+d} \br{\frac{\alpha}{e^{\eps_2} + 1}}^{b+c}\cdot\\ & \br{e^{a\eps_1 + b\eps_2} - e^{(d+u+v-k)\eps_1 + (c+u-v)\eps_2}} > 0
\end{align*}
or, equivalently,
\begin{equation*}
     (a+k-d-u-v)\eps_1 + (b+v-c-u)\eps_2 > 0.
\end{equation*}
Denote by $B(\eps_1,\eps_2;u,v)$ the set of solutions, i.e.
\begin{align*}
    &B(\eps_1,\eps_2;u,v) =\\ &\cbr{(a,b,c,d) \in \Iintv{0,k}^4 | \begin{cases} a+b+c+d = k, \\(a+k-d-u-v)\eps_1\\ + (b+v-c-u)\eps_2 > 0 \end{cases} }.
\end{align*}
This concludes the computation of the privacy region stated in Theorem \ref{thm:alt-main}. The converse proof follows a similar line as in section \ref{subsubsec:converse_proof}.

\subsection{Proof of Proposition \ref{prop:approx_below}}\label{subsec:proof_approx_below}
\noindent First observe that
\begin{align*}
    &\arg\min_{\substack{(\boldeps, \bolddelta) \in (\bbR^+)^2 \times [0,1]^2\\ 0 \leq  f_{\boldeps, \bolddelta} \leq f} } \int_0^1 f(t) - f_{\boldeps, \bolddelta}(t) \di t\\ &= \arg\max_{\substack{(\boldeps, \bolddelta) \in (\bbR^+)^2 \times [0,1]^2\\ 0 \leq  f_{\boldeps, \bolddelta} \leq f} } \int_0^1  f_{\boldeps, \bolddelta}(t) \di t
\end{align*}
because $\int_0^1 f(t) \di t$ is a constant w.r.t. $(\boldeps, \bolddelta)$. In the following, we will thus aim to maximize the integral $\int_0^1 f_{\boldeps, \bolddelta}(t)\di t$ under the constraint $f_{\boldeps, \bolddelta} \leq f$.\\
Consider the $\frac{\pi}{4}$ rotation matrix
\begin{equation*}
    R = \frac{1}{\sqrt{2}}\begin{bmatrix}
        1 & -1 \\
        1 & 1
    \end{bmatrix} = \begin{bmatrix}
        \cos\br{\frac{\pi}{4}} & -\sin\br{\frac{\pi}{4}}\\
        \sin\br{\frac{\pi}{4}} & \cos\br{\frac{\pi}{4}}
    \end{bmatrix}.
\end{equation*}
Rotate the graph of $f$
\begin{equation*}
    \graph{f} = \cbr{ (x,f(x)) \ | \ x \in [0,1]},
\end{equation*}
this yields
\begin{align*}
    R\br{ \graph{f}} &= \cbr{R(x,f(x)) \ | \ x \in [0,1]}\\  &= \cbr{\frac{1}{\sqrt{2}}\br{x-f(x), x+f(x)} \ | \ x \in [0,1]}.
\end{align*}
Let $u := s(x) = \frac{x-f(x)}{\sqrt{2}}$ for $x \in [0,1]$. Observe that $s$ is an increasing function of $x$: letting $1 \geq x_1 > x_2 \geq 0$, we have:
\begin{align*}
    \sqrt{2}(s(x_1) - s(x_2)) &= x_1 - f(x_1) - x_2 + f(x_2)\\
    &= \underbrace{x_1 - x_2}_{> 0 } + \underbrace{f(x_2) - f(x_1)}_{\substack{\geq 0 \text{ because } \\ f \text{ non-increasing}}}\\
    & > 0.
\end{align*}
Using the symmetry of $f$ and the fact that $s$ is increasing, we see that
\begin{equation*}
    \imag{s} = \sbr{\frac{-f(0)}{\sqrt{2}}, \frac{f(0)}{\sqrt{2}}} \subseteq \sbr{\pm \frac{1}{\sqrt{2}}}.
\end{equation*}
This shows that the following function is well-defined:
\begin{align*}
    g: \sbr{-\frac{f(0)}{\sqrt{2}},\frac{f(0)}{\sqrt{2}}} &\to \R,\\
    u &\mapsto g(u) = \frac{x+f(x)}{\sqrt{2}}, \text{ where } x = s^{-1}(u).
\end{align*}
Consequently,
\begin{equation*}
R(\graph{f}) = \graph{g}.
\end{equation*}
Observe that $g$ is the 45 degrees counter-clockwise rotated version of $f$. $g$ inherits the smoothness and convexity properties of $f$:
\begin{itemize}
    \item Smoothness (as smooth as $f$): because $g$ is obtained through inverting a linear transformation of $x$ and $f(x)$.
    \item Convexity: $\sqrt{2}g = (\cdot +f(\cdot )) \circ s^{-1}$ which is a composition of convex functions. Indeed, $s^{-1}$ is convex because $\sqrt{2}s''(x) = -f''(x) \leq 0$, hence $(s^{-1})'' \geq 0$. Note that when $f$ is strictly convex, $g$ is as well.
\end{itemize}
$f$'s self-symmetry property results in $g$ being even. Indeed, first observe
\begin{equation*}
    -u = \frac{f(x)-x}{\sqrt{2}} = \frac{f(x)-f(f(x))}{\sqrt{2}} = s(f(x)).
\end{equation*}
Hence
\begin{equation*}
    g(-u) = \frac{f(x)+f(f(x))}{\sqrt{2}} = \frac{f(x)+x}{\sqrt{2}} = g(u).
\end{equation*}
Thus, instead of approximating $f$ on $[0,1]$, we can equivalently approximate $g$ on $\sbr{-\frac{f(0)}{\sqrt{2}}, 0}$ by a 2-piece piecewise affine function $\tilde{g}$, and mirror $g$ with respect to $u=0$. Let $L_1$ and $L_2$ be the two affine pieces that approximate $g$, with $L_i(t) = \alpha_i t + \beta_i$. Let $L_1$ and $L_2$ meet at $t^*$, i.e. $L_1(t^*) = L_2(t^*)$, where $t^*$ is thus a function of the $\alpha_i,\beta_i$ parameters. Let $z = -\frac{f(0)}{\sqrt{2}}$. We seek to maximize
\begin{equation*}
    I = \int_z^{t^*} L_1(t)\di t + \int_{t^*}^0 L_2(t) \di t
\end{equation*}
under the constraint 
\begin{equation*}
    \max\cbr{L_1(t), L_2(t)} \leq g(t).
\end{equation*}
Observe that $L_i$ have to be tangent to $g$, by convexity of $g$. If $L_i$ is not tangent to $g$, its offset $\beta_i$ can be increased until $L_i$ becomes tangent to $g$, strictly increasing the value of the integral above. Thus, 
\begin{equation*}
    \begin{cases}
        L_1(t) = g(t_1) + g'(t_1)(t-t_1) \ \text{ for some } t_1 \in \sbr{z,t^*},\\
        L_2(t) = g(t_2) + g'(t_2)(t-t_2) \ \text{ for some } t_1 \in \sbr{t^*,0}.
    \end{cases}
\end{equation*}
$t^*$ can be explicitly written as a function of $t_1,t_2$ using $L_1(t^*) = L_2(t^*)$. Define
\begin{align*}
&J(t_1,t_2,t^*)\\ &= \int_z^{t^*} L_1(t)\di t + \int_{t^*}^0 L_2(t) \di t\\ &= \int_z^{t^*} g(t_1) + g'(t_1)(t-t_1)\di t + \int_{t^*}^0 g(t_2) + g'(t_2)(t-t_2) \di t
\end{align*}
and maximize $J$ under the constraint 
\begin{equation*}
    c(t_1,t_2,t^*) = L_1(t^*) - L_2(t^*) = 0.
\end{equation*}
To this end, consider the Lagrangian
\begin{equation*}
    L(t_1,t_2,t^*,\lambda) = J(t_1,t_2,t^*) - \lambda c(t_1,t_2,t^*).
\end{equation*}
Computing the partial derivatives of $L$, using the Leibnitz rule to differentiate under the integral:
\begin{align*}
    \frac{\partial L}{\partial t_1}(t_1,t_2,t^*,\lambda) &= \int_{z}^{t^*} g''(t_1)(t-t_1) \di t - \lambda g''(t_1)(t^* - t_1),\\
    \frac{\partial L}{\partial t_2}(t_1,t_2,t^*,\lambda) &= \int_{t^*}^{0} g''(t_2)(t-t_2) \di t + \lambda g''(t_2)(t^* - t_2),\\
    \frac{\partial L}{\partial t^*}(t_1,t_2,t^*,\lambda)&=L_1(t^*) - L_2(t^*) - \lambda \br{g'(t_1) - g'(t_2)},\\
    \frac{\partial L}{\partial \lambda}(t_1,t_2,t^*,\lambda) &= L_2(t^*) - L_1(t^*).
\end{align*}
Developing the expressions for $\frac{\partial L}{\partial t_1}$ and $\frac{\partial L}{\partial t_2}$:
\begin{align*}
    \int_{z}^{t^*} g''(t_1)(t-t_1) \di t &= \frac{g''(t_1)}{2}\sbr{(t^*-t_1)^2 - (z-t_1)^2}\\ &= g''(t_1)\br{t^*-z}\br{\frac{t^*+z}{2}-t_1},
\end{align*}
and similarly
\begin{equation*}
\int_{t^*}^{0} g''(t_2)(t-t_2) \di t = -g''(t_2)t^*\br{\frac{t^*}{2}-t_2^2}.
\end{equation*}
We set 
\begin{equation*}
    \nabla L = 0.
\end{equation*}
This yields the following system:
\begin{align*}
&\begin{cases}
    g''(t_1)\br{t^*-z}\br{\frac{t^*+z}{2}-t_1} = \lambda g''(t_1)(t^* - t_1)\\
    g''(t_2)t^*\br{\frac{t^*}{2}-t_2} = \lambda g''(t_2)(t^* - t_2)\\
    L_1(t^*) - L_2(t^*) = \lambda \br{g'(t_1) - g'(t_2)}\\
     L_1(t^*) - L_2(t^*) = 0
\end{cases}\\
\iff &\begin{cases}
    g''(t_1)\br{t^*-z}\br{\frac{t^*+z}{2}-t_1} = \lambda g''(t_1)(t^* - t_1)\\
    g''(t_2)t^*\br{\frac{t^*}{2}-t_2} = \lambda g''(t_2)(t^* - t_2)\\
    \lambda \br{g'(t_1) - g'(t_2)} = 0\\
    L_1(t^*) - L_2(t^*) = 0.
\end{cases}
\end{align*}
Assume strict convexity of $f$, implying strict convexity of $g$. Indeed, $g$ is only non strictly convex when $g$ is affine over an open interval, translating to an affine piece of $f$. Then $g'' \neq 0$ and the above simplifies to
\begin{align*}
&\begin{cases}
    \br{t^*-z}\br{\frac{t^*+z}{2}-t_1} = \lambda (t^* - t_1)\\
    t^*\br{\frac{t^*}{2}-t_2} = \lambda (t^* - t_2)\\
    \lambda \br{g'(t_1) - g'(t_2)} = 0\\
    L_1(t^*) - L_2(t^*) = 0
\end{cases}\\
\iff &\begin{cases}
    \lambda = \frac{\br{t^*-z}\br{\frac{t^*+z}{2}-t_1}}{t^* - t_1}\\
    \lambda = \frac{t^*\br{\frac{t^*}{2}-t_2}}{t^* - t_2}\\
    \lambda \br{g'(t_1) - g'(t_2)} = 0\\
    L_1(t^*) - L_2(t^*) = 0.
\end{cases}
\end{align*}
The third equation yields either $g'(t_1) - g'(t_2) = 0$, thus $L_1$ and $L_2$ are in fact the same line, or $\lambda = 0$ implying
\begin{equation*}
    t_1 = \frac{t^* + z}{2}, \ t_2 = \frac{t^*}{2}.
\end{equation*}
Observe that the first case is contained within the second one, as when $g'(t_1) = g'(t_2)$ both lines are the same and thus the constraint is met with $\lambda = 0$. Thus, w.l.o.g., the system becomes
\begin{align}
    &\begin{cases}
        t_1 = \frac{t^* + z}{2}\\
        t_2 = \frac{t^*}{2}\\
        L_1(t^*) = L_2(t^*)
    \end{cases}\nonumber \\\iff &\begin{cases}
        t_1 = \frac{t^* + z}{2}\\
        t_2 = \frac{t^*}{2}\\
        g(t_1) + g'(t_1)(t^*-t_1) = g(t_2) + g'(t_2)(t^*-t_2)
    \end{cases} \nonumber\\
    \iff &\begin{cases}
        t_1 = \frac{t^* + z}{2}\\
        t_2 = \frac{t^*}{2}\\
        g\br{\frac{t^*+z}{2}} + g'\br{\frac{t^*+z}{2}}\frac{t^*-z}{2} = g\br{\frac{t^*}{2}} + g'\br{\frac{t^*}{2}}\frac{t^*}{2}.
    \end{cases}\nonumber
\end{align}
Writing $L_i(t) = \alpha_it + \beta_i$, we have 
\begin{equation*}
    \alpha_i = g'(t_i), \ \beta_i = g(t_i) - g'(t_i)t_i.
\end{equation*}
Rotating back to the original orientation, we find:
\begin{equation*}
    R^{-1}(t, \alpha_i t + \beta_i) = (r,v) \iff v = \frac{\alpha_i-1}{\alpha_1 + 1}r + \frac{\sqrt{2}\beta_i}{\alpha_i+1}.
\end{equation*}
Thus, $\alpha_i t + \beta_i$ maps to the line $a_i t + b_i$ in the original orientation, where
\begin{equation*}
    a_i = \frac{\alpha_i-1}{\alpha_i + 1}, \ b_i = \frac{\sqrt{2}\beta_i}{\alpha_i+1}.
\end{equation*}
Equivalently, observing that $a_it + b_i = -e^{\eps_i}t +(1-\delta_i)$ for $t \in [0,c]$,
\begin{equation*}
    \eps_i = \ln\br{-\frac{\alpha_i-1}{\alpha_i + 1}}, \ \delta_i = 1-\frac{\sqrt{2}\beta_i}{\alpha_i+1}.
\end{equation*}
Observe that the case $\alpha_i = -1 \iff a_i = -\infty$ cannot occur under our smoothness conditions on $f$.

\subsection{Proof of Proposition \ref{prop:approx_above}}\label{subsec:proof_approx_above}
\noindent The first step of this proof is similar to the previous one.
\begin{align*}
    &\arg\min_{\substack{(\boldeps, \bolddelta) \in (\bbR^+)^2 \times [0,1]^2\\ 1 \geq f_{\boldeps, \bolddelta} \geq f} } \int_0^1 f_{\boldeps, \bolddelta}(t) - f(t) \di t\\ &= \arg\min_{\substack{(\boldeps, \bolddelta) \in (\bbR^+)^2 \times [0,1]^2\\ 1 \geq f_{\boldeps, \bolddelta} \geq f} } \int_0^1 f_{\boldeps, \bolddelta}(t)\di t.
\end{align*}
Thus we minimize the area under $f_{\boldeps,\bolddelta}$ with the constraint $f_{\boldeps, \bolddelta} \geq f$.\\
$f_{\boldeps,\bolddelta}$ is a four piece piecewise-affine function. $f_{\boldeps,\bolddelta}$ is also a trade-off function, thus $f_{\boldeps,\bolddelta} = f_{\boldeps,\bolddelta}^{-1}$ implying symmetry w.r.t. the line $y= x$. Above this line, $f_{\boldeps,\bolddelta}$ is represented by two affine pieces on $[0,c']$, $L_1(t) = a_1t + b_1 \ \text{ for } t \in [0,t^*]$ and $L_2(t) = a_2t + b_2 \ \text{ for } t \in [t^*,c']$ for some choice of $t^* \in [0,1]$, such that $L_2(c') = c'$. $L_1, L_2$ are then mirrored w.r.t. $y =x$, yielding $f_{\boldeps,\bolddelta}(t)$ for $t \in [c', 1]$. Observe that, for $f_{\boldeps,\bolddelta}$ to be optimal, $L_1(0) = f(0) := f_0$ and $c' = c \implies L_2(c) = f(c) = c$. Else, one can trivially obtain a better approximation, in this context a smaller integral, by lowering the lines until they reach these two points. Thus, we seek lines $L_1$ and $L_2$ such that
\begin{equation*}
    \begin{cases}
        L_1(t) = a_1t + b_1 \ \text{ for } t \in [0,t^*],\\
        L_2(t) = a_2t + b_2 \ \text{ for } t \in [t^*,c],\\
        L_1(0) = f(0) = f_0,\\
        L_2(c) = c,\\
        L_1(t^*) = L_2(t^*) = f(t^*)
    \end{cases}
\end{equation*}
and $\int_0^c \max\cbr{L_1(t),L_2(t)} \di t$ is minimal. We can simplify the system above to
\begin{equation*}
    \begin{cases}
        L_1(t) = \frac{f(t^*)-f_0}{t^*}t + f_0 \ &\text{ for } t \in [0,t^*],\\
        L_2(t) = \frac{f(t^*)-c}{t^*-c}\br{t-c}+c\ &\text{ for } t \in [t^*,c]
    \end{cases}
\end{equation*}
for some optimal choice $t^*$. Set 
\begin{equation*}
    J(t^*) = \int_{0}^{t^*} L_1(t) \di t + \int_{t^*}^c L_2(t) \di t.
\end{equation*}
Then, we compute each integral. In the following assume $t^* \not\in \cbr{0,c}$, else $J$ reduces to either of the 2 integrals.
\begin{align*}
 \int_{0}^{t^*} L_1(t) \di t &=  \int_0^{t^*}\frac{f(t^*)-f_0}{t^*}t + f_0 \di t\\
 &= f_0t^* + \frac{f(t^*)-f_0}{2}t^*\\
 &= t^* \frac{f(t^*)+f_0}{2},
\end{align*}
and
\begin{align*}
 \int_{t^*}^c L_2(t) \di t &= \int_{t^*}^c \frac{f(t^*)-c}{t^*-c}\br{t-c}+c\di t\\
 &= c(c-t^*) + \frac{1}{2}\frac{f(t^*)-c}{t^*-c}\sbr{(t-c)^2}_{t^*}^c\\
 &= c(c-t^*) + \frac{1}{2}\frac{f(t^*)-c}{t^*-c}\br{-(t^*-c)^2}\\
 &= c(c-t^*) - \frac{1}{2}\frac{f(t^*)-c}{t^*-c}(t^*-c)^2\\
 &= -(t^*-c)\sbr{c+\frac{f(t^*)-c}{2}}\\
 &= -(t^*-c)\frac{f(t^*)+c}{2}.
\end{align*}
The original problem becomes
\begin{align*}
&\arg\min_{\substack{(\boldeps, \bolddelta) \in (\bbR^+)^n \times [0,1]^n\\ f_{\boldeps, \bolddelta} \geq f} } \int_0^1 f_{\boldeps, \bolddelta}(t)  \di t \equiv\\ &\arg\min_{t^* \in [0,c]} J(t^*) = t^* \frac{f(t^*)+f_0}{2} -(t^*-c)\frac{f(t^*)+c}{2}.
\end{align*}
Computing derivatives $J'$ and $J''$ of $J$, which are well defined, because $f$ is twice differentiable:
\begin{align*}
J'(t^*) &= \frac{f(t^*)+f_0 + t^*f'(t^*)}{2} - \frac{f(t^*)+c}{2} - \frac{t^*-c}{2}f'(t^*)\\
&= \frac{f_0-c}{2}+\frac{c}{2}f'(t^*),\\
\text{and } J''(t^*) &= \frac{c}{2}f''(t^*) \geq 0.
\end{align*}
$J$ being convex, its stationary points are global minima. Observe that in the previous section, we have already required $f$ to be strictly convex. Hence $J$ is strictly convex as well ($J'' > 0$), and has a unique global minimum over $[0,c]$. We seek the stationary points of $J$:
\begin{align*}
    J'(t_\text{stat}) = 0 &\iff \frac{f_0-c}{2}+\frac{c}{2}f'(t_\text{stat}) = 0\\ &\iff f'(t_\text{stat}) = \frac{f(c) - f(0)}{c-0}.
\end{align*}
Observe that a solution $t_\text{stat} \in \br{0,c}$ to the previous equation exists by the mean value theorem, and is unique by the strict convexity of $J$. Solving that equation then comparing $J(t_\text{stat})$ to $J(0)$ and $J(c)$, we have that 
\begin{equation*}
     t^* := \arg\min_{t^* \in [0,c]} J(t^*)  =  \arg\min\cbr{J(0),J(c), J(t_\text{stat})}.
\end{equation*}
Observe that 
\begin{equation*}
    J(0) = J(c) = c\frac{c+f_0}{2}.
\end{equation*}
Hence $J(t_\text{stat}) < J(0) \iff J(t_\text{stat}) < J(c)$ and to decide between $t_\text{stat}$ and $\cbr{0,c}$, one can compute
\begin{equation*}
    h(t_\text{stat}) = 2(J(t_\text{stat})-J(0)) = t_\text{stat}f(0) + c(f(t_\text{stat})-f(0)-t_\text{stat}),
\end{equation*}
and we have
\begin{equation*}
   t_\text{stat} \in \arg\min\cbr{J(0),J(c), J(t_\text{stat})} \iff h(t_\text{stat}) \leq 0.
\end{equation*}
Lastly, we can compute the affine pieces. If $t^* \in \cbr{0,c} \iff h(t_\text{stat}) \geq 0$, then one of the two affine pieces collapses and only the second one remains, representing the $(\eps,\delta)$-DP constraint
\begin{equation*}
    \begin{cases}
        \eps = \ln\br{\frac{f(0)-c}{c}},\\
        \delta = 1-f(0).
    \end{cases}
\end{equation*}
When $t^* = t_\text{stat} \iff h(t_\text{stat}) \leq 0$,
\begin{equation*}
    \begin{aligned}
    &\begin{cases}
                \eps_1 = \ln\br{\frac{f(0)-f(t_\text{stat})}{t_\text{stat}}},\\
        \delta_1 = 1-f(0),\\
    \end{cases}\\
        &\begin{cases}
            \eps_2 = \ln\br{\frac{c-f(t_\text{stat})}{t_\text{stat}-c}},\\
            \delta_2 = 1-c-c\frac{c-f(t_\text{stat})}{t_\text{stat}-c}.
        \end{cases}
    \end{aligned}
\end{equation*}

\subsection{Computation of the mixture of piecewise affine functions}\label{subsec:computation_piecewise_affine}
Lemma \ref{lem:mixture} states that, given trade-off functions $\cbr{f_i}_{i=1}^n$ with associated weights $\cbr{\alpha_i}_{i=1}^n \subseteq (0,1]$ such that $\sum_{i=1}\alpha_i=1$, the trade-off function of the mixture of the corresponding $n$ hypothesis tests is
\begin{equation*}
    f_m(t) = \min_{\substack{t_i \in [0,1], \ i \in \Iintv{1,n}\\ \sum_{i=1}^n \alpha_i t_i = t}} \sum_{i=1}^n \alpha_i f_i(t_i).
\end{equation*}
This equation looks very similar to an infimal convolution. 
\begin{definition}[Infimal convolution]\label{def:infconv} 
Let $g_i: \R \to \R \cup\cbr{\pm\infty}$ be functions, for $i=1$ to $n$. Their infimal convolution $g: \R \to \R \cup\cbr{\pm\infty}$ is defined by
\begin{equation*}
    g(t) = \br{\square_{i=1}^n g_i}(t) = \inf_{\br{x_i}_{i=1}^n: \sum_{i=1}^n x_i = t} \sum_{i=1}^n g_i(x_i).
\end{equation*}
\end{definition}
$f_m$ can be written as an infimal convolution of some functions. Extend $f_i$ by letting $f_i(t) = +\infty$ when $t \not \in [0,1]$. Define
\begin{equation*}
    g_i(x) = \alpha_i f_i\br{\frac{x}{\alpha_i}},
\end{equation*}
where $g_i(x) = +\infty \iff x \not\in [0,\alpha_i]$. Effectively, this is a change of variables $x_i = t_i \alpha_i$. We can write
\begin{align*}
    f_m(t) &= \min_{\substack{t_i \in [0,1], \ i \in \Iintv{1,n}\\ \sum_{i=1}^n \alpha_i t_i = t}} \sum_{i=1}^n \alpha_i f_i(t_i)\\
    &= \min_{\substack{t_i \in [0,1], \ i \in \Iintv{1,n}\\ \sum_{i=1}^n \alpha_i t_i = t}} \sum_{i=1}^n g_i(\alpha_it_i)\\
    &=\min_{\substack{\alpha_i t_i \in [0,\alpha_i], \ i \in \Iintv{1,n}\\ \sum_{i=1}^n \alpha_i t_i = t}} \sum_{i=1}^n g_i(\alpha_it_i)\\
    &= \min_{\substack{x_i \in [0,\alpha_i], \ i \in \Iintv{1,n}\\ \sum_{i=1}^n x_i = t}} \sum_{i=1}^n g_i(x_i)\\
    &= \min_{\substack{x_i \in \R, \ i \in \Iintv{1,n}\\ \sum_{i=1}^n x_i = t}} \sum_{i=1}^n g_i(x_i)\\
    f_m(t) &= \br{\square_{i=1}^n g_i}(t).
\end{align*}
The second to last step is due to $g_i(x) = +\infty$ for $x \notin [0,\alpha_i]$.

An infimal convolution can be computed through the \textbf{convex conjugate} \cite{boyd_convex_optimization}.
\begin{definition}[Convex conjugate]
Let $f: \R \to \R\cup \cbr{\pm\infty}$ be a function. Its convex conjugate or Legendre-Fenchel conjugate $f^*:\R \to \R\cup \cbr{\pm\infty}$ is defined by
\begin{equation*}
    f^*(s) = \sup_{t\in\R} st - f(t).
\end{equation*}
\end{definition}
Known convex conjugate properties \cite{boyd_convex_optimization} include the sequel.
\begin{proposition}[Convex conjugate of an infimal convolution]
A few properties of the convex conjugate will be important to us. They are based on the fact that a proper function $f = f^{**}$ if and only if $f$ is convex and lower semi-continuous.\\
Let $g_i$ be proper, convex and lower semicontinuous. Then
\begin{enumerate}
    \item the convex conjugate of the infimal convolution of $g_i$ is the sum of their convex conjugates:
    \begin{equation*}
        \br{\square_{i=1}^n g_i}^* = \sum_{i=1}^n g_i^*.
    \end{equation*} 
    \item the convex conjugate of their summation is the infimal convolution of their convex conjugates: \begin{equation*}
        \br{\sum_{i=1}^n g_i}^* = \square_{i=1}^n g_i^*.
    \end{equation*}
\end{enumerate}
Lastly, observe that the infimal convolution of proper, convex and lower semicontinuous functions is convex and lower semicontinuous.
\end{proposition}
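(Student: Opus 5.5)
The plan is to prove item 1 by a direct computation, which needs no hypotheses. Then item 2 follows from item 1 and the Fenchel--Moreau theorem ($g=g^{**}$ for proper, convex, lower semicontinuous $g$), with a closedness check that I expect to be the only real difficulty. For item 1, fix $s\in\R$ and unfold the definitions:
\begin{align*}
\br{\square_{i=1}^n g_i}^*(s) &= \sup_{t\in\R}\Big( st - \inf_{\sum_i x_i = t}\sum_{i=1}^n g_i(x_i)\Big)
= \sup_{t\in\R}\ \sup_{\sum_i x_i=t}\ \sum_{i=1}^n \big(s x_i - g_i(x_i)\big)\\
&= \sup_{(x_1,\ldots,x_n)\in\R^n}\ \sum_{i=1}^n \big(s x_i - g_i(x_i)\big) = \sum_{i=1}^n g_i^*(s).
\end{align*}
The last equality holds because the supremum of a separable sum over a product set is the sum of the suprema. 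The conventions $\pm\infty$ are harmless here because each $g_i$ is proper, so no term $-\infty+\infty$ arises.

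For the closing remark, convexity of $g=\square_i g_i$ comes from the strict epigraph identity $\{(t,r): r>g(t)\} = \sum_i \{(x,r): r>g_i(x)\}$. The right-hand side is a Minkowski sum of convex sets and is therefore convex. Lower semicontinuity is not automatic, so I would prove it under the finiteness/compactness that holds in our use. In our setting $g_i(x)=\alpha_i f_i(x/\alpha_i)$ is finite and continuous on $[0,\alpha_i]$ and $+\infty$ elsewhere, so the infimum defining $g$ is taken over a compact set and is attained. Take $t_k\to t$ with $g(t_k)\to \liminf$, pick minimizers $(x^k_i)_i$, and extract a convergent subsequence. Lower semicontinuity of each $g_i$ then gives $g(t)\le\liminf_k g(t_k)$. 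More generally, it suffices that one $g_i$ has compact sublevel sets (is coercive).

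For item 2, apply item 1 to the family $(g_i^*)$, which are again proper, convex and lower semicontinuous. This gives $\br{\square_i g_i^*}^* = \sum_i g_i^{**} = \sum_i g_i$ by Fenchel--Moreau. Conjugating both sides yields $\br{\sum_i g_i}^* = \br{\square_i g_i^*}^{**}$. This equals $\square_i g_i^*$ exactly when $\square_i g_i^*$ is proper, convex and lower semicontinuous.

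This last step is the main obstacle, since in general one only gets the closure of the infimal convolution. I would first discharge it in our setting: the $g_i$ have compact domains $[0,\alpha_i]$, so each $g_i^*$ is finite and convex on all of $\R$. A finite sum-infimum of finite convex functions on $\R$ is a finite convex function, since it is bounded above by $\sum_i g_i^*(t/n)<\infty$ and bounded below because $\sum_i g_i$ is proper. A finite convex function on $\R$ is continuous, hence lower semicontinuous, and the identity follows. For the fully general statement I would add the standard qualification $\bigcap_i \mathrm{ri}\,\mathrm{dom}\, g_i \neq \emptyset$, under which the infimal convolution of the conjugates is exact and closed. I would cite \cite{bauschke_convex_analysis} for that case rather than reprove it, and state explicitly that only the compact-domain case is needed for the piecewise affine computations that follow.
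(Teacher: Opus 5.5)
The paper gives no proof of this proposition; it cites it as standard from \cite{boyd_convex_optimization}. Your item~1 computation is complete and correct. Your reduction of item~2 to item~1 plus Fenchel--Moreau, with everything hinging on lower semicontinuity of $\square_i g_i^*$, is the natural route. The gap is at that hinge. You prove lower semicontinuity only for compact domains, and for the general case you propose to \emph{add} the hypothesis $\bigcap_i \mathrm{ri}\,\mathrm{dom}\, g_i\neq\emptyset$ and cite it. That proves a weaker statement than the one posed. Your diagnosis that lower semicontinuity ``is not automatic'' and that one ``only gets the closure'' is correct in $\R^d$ with $d\ge 2$. But the functions here live on $\R$ (Definition~\ref{def:infconv}), and on $\R$ neither the closing remark nor item~2 needs any qualification. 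For the paper's own downstream use, $f_m=f_m^{**}$ with $g_i$ supported on $[0,\alpha_i]$, your compact-domain argument does suffice.

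The missing lemma is this: if $h_1,\ldots,h_n$ are proper, convex and lower semicontinuous on $\R$, then $h=\square_i h_i$ is lower semicontinuous and is either proper or $\equiv-\infty$. Its domain $D=\sum_i \mathrm{dom}\,h_i$ is an interval. On $\mathrm{int}\,D$ the convex function $h$ is either finite and continuous or identically $-\infty$, and off $\overline{D}$ it equals $+\infty$ on an open set. So only a finite endpoint, say $t_0=\inf D$, needs care. Then $t_0=\sum_i a_i$ with every $a_i=\inf\mathrm{dom}\,h_i$ finite. Hence every finite-valued decomposition of $t\in[t_0,t_0+\eta]$ has $x_i\in[a_i,a_i+\eta]$, which gives $h(t)\ge\sum_i\min_{[a_i,a_i+\eta]}h_i$. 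Each minimum is $>-\infty$ and tends to $h_i(a_i)$ as $\eta\downarrow 0$, by lower semicontinuity. Meanwhile $h(t_0)=\sum_i h_i(a_i)$: the decomposition is forced, and both sides are $+\infty$ if $t_0\notin D$. This yields lower semicontinuity at $t_0$. The same bound shows $h>-\infty$ near every finite endpoint, so if $h$ ever takes the value $-\infty$ then $D=\R$ and $h\equiv-\infty$, in which case $h^{**}=h$ too. Applying the lemma with $h_i=g_i$ gives the closing remark, and with $h_i=g_i^*$ it completes your proof of item~2 with no extra hypothesis. A smaller point: your aside that coercivity of one $g_i$ suffices does not support your minimizer argument unless the others are bounded below. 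For instance, $|x|\,\square\,(-2x)\equiv-\infty$ admits no minimizers.
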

Coupling these statements, we obtain the following lemma.
\begin{lemma}
\begin{equation*}
    f_m = f_m^{**} = \br{\sum_{i=1}^n g_i^*}^* = \br{\sum_{i=1}^n \alpha_i f_i^*}^*.
\end{equation*}
\end{lemma}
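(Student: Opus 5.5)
The identity will follow from the biconjugation theorem applied to the infimal convolution representation established just above, together with a scaling rule for conjugates. I will take the four equalities in order.

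\emph{First equality.} I will check that $f_m$, extended by $+\infty$ outside $[0,1]$, is proper, convex and lower semicontinuous, so that the proposition quoted above gives $f_m=f_m^{**}$. Each $f_i$ is a trade-off function, hence convex, continuous and non-increasing on $[0,1]$ (Proposition 2 of \cite{fdp}). The extended $f_i$ is therefore proper, convex and lsc, because the domain $[0,1]$ is closed and $f_i$ is continuous up to its endpoints. Each $g_i(x)=\alpha_i f_i(x/\alpha_i)$ is the perspective-type rescaling of $f_i$ with fixed $\alpha_i>0$, so it inherits these properties, with domain $[0,\alpha_i]$. Indices with $\alpha_i=0$ are discarded beforehand, as in the proof of Lemma~\ref{lem:mixture}.

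By the closing remark of the proposition, the infimal convolution $f_m=\square_i g_i$ of such functions is convex and lsc. In this case lower semicontinuity can be checked directly: the domains are compact intervals, so the infimum is attained, as Lemma~\ref{lem:mixture} already shows, and $f_m$ is continuous on $[0,1]$. Properness also holds, since $f_m\le \sum_i\alpha_i f_i(t)<\infty$ and $f_m\ge0$. Hence $f_m=f_m^{**}$.

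\emph{Second equality.} Item 1 of the proposition gives $f_m^*=(\square_i g_i)^*=\sum_i g_i^*$. Taking conjugates of both sides yields $f_m^{**}=(\sum_i g_i^*)^*$.

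\emph{Third equality.} This is a direct computation:
\begin{equation*}
g_i^*(s)=\sup_x\, sx-\alpha_i f_i(x/\alpha_i)=\alpha_i\sup_t\,\br{st-f_i(t)}=\alpha_i f_i^*(s),
\end{equation*}
using the substitution $x=\alpha_i t$ with $\alpha_i>0$. Summing over $i$ gives $\sum_i g_i^*=\sum_i\alpha_i f_i^*$, and the last equality follows.

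The only step needing care is the closedness and properness check for the first equality. The case of $\alpha_i=0$ and the endpoint behaviour of $f_i$ at $0$ and $1$ must be handled so that the extended functions are genuinely lsc rather than merely convex. The rest is routine conjugate calculus.
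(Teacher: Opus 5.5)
Your proposal is correct and follows essentially the same route as the paper: the first two equalities come from biconjugation and the rule $(\square_i g_i)^*=\sum_i g_i^*$ from the preceding proposition, and the last comes from the substitution $x=\alpha_i t$, which gives $g_i^*=\alpha_i f_i^*$. Your explicit check that $f_m$ is proper and lower semicontinuous, using $0\le f_m(t)\le\sum_i\alpha_i f_i(t)$ on $[0,1]$ and the compactness of the domains, is a worthwhile addition, since the paper relies only on the general remark about infimal convolutions, where properness in particular is not automatic.
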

\begin{proof}
To prove this, it remains to see that $g_i^* = \alpha_i f_i^*$:
\begin{align*}
    g_i^*(s) &= \sup_{t\in \R} st - g_i(t)\\
    &= \sup_{t\in \R} st - \alpha_i f_i\br{\frac{t}{\alpha_i}}\\
    &= \sup_{t\in \R} s\alpha_i\frac{t}{\alpha_i} - \alpha_i f_i\br{\frac{t}{\alpha_i}}\\
    &= \alpha_i\sup_{\frac{t}{\alpha_i}\in \R} s\frac{t}{\alpha_i} -  f_i\br{\frac{t}{\alpha_i}}\\
    &= \alpha_i\sup_{u\in \R} su -  f_i\br{u}\\
    g_i^*(s)&= \alpha_i f_i^*(s).
\end{align*}
\end{proof}
The formula above is only interesting if $f_i^*$ and the conjugate of their convex combinations can be computed. A particularly important case arises when $f_i$ is piecewise affine, which is the case when $f_i$ is the trade-off function of an intersection of $(\eps,\delta)$-DP constraints.

Let $f_i(t) = \max\cbr{a_{ij}t + b_{ij}}_{j=1}^{m_i}$ for $t \in [0,1]$, and $f_i(t) = +\infty$ else. We can compute $f_i^*$ through the following lemma.
\begin{proposition}[Convex conjugate of a piecewise affine function with bounded domain]\label{prop:conj_bounded}
Let $f(t) = \max\cbr{a_{j}t + b_{j}}_{j=1}^{m}$ for $t \in [t_0, t_m]$ and $+\infty$ else, such that each line constraint $(a_j,b_j)$ is active at least at one $t$, and $a_j < a_{j+1}$ for $j \in [1,m-1]$. Then, $\forall s \in \R$,
\begin{equation*}
f^*(s) = \max \cbr{t_j s - f(t_j)}_{i=0}^m
\end{equation*}
where $t_j = -\frac{b_{j+1}-b_j}{a_{j+1}-a_j} $ for $j \in [1,m-1]$.
\end{proposition}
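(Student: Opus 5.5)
The plan is to reduce the supremum defining $f^*(s)$ to a finite maximum over the vertices of the graph of $f$. Since $f=+\infty$ outside $[t_0,t_m]$, we have $f^*(s)=\sup_{t\in[t_0,t_m]} \phi_s(t)$ with $\phi_s(t)=st-f(t)$. So it suffices to show two things: (a) $f$ is affine on each interval $[t_{j-1},t_j]$, $j\in\Iintv{1,m}$, where $t_0<t_1<\dots<t_{m-1}<t_m$; and (b) an affine function on a compact interval attains its maximum at an endpoint. Combining them, $\sup_{[t_0,t_m]}\phi_s=\max_j \max\{\phi_s(t_{j-1}),\phi_s(t_j)\}=\max_{j\in\Iintv{0,m}}\bigl(t_js-f(t_j)\bigr)$. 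The index in the displayed maximum should read $j$, not $i$. The supremum is attained because $[t_0,t_m]$ is compact and $f$ is continuous there, being a maximum of affine maps.

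The main work is step (a): identifying the breakpoints of $f=\max_j\{a_jt+b_j\}$ on $[t_0,t_m]$ as the consecutive intersections $t_j=-\frac{b_{j+1}-b_j}{a_{j+1}-a_j}$. I would argue as follows.

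For each $j$, let $I_j=\{t\in[t_0,t_m] : a_jt+b_j=f(t)\}$. Each $I_j$ is a closed interval: it is the set where a convex function meets one of its affine minorants, which is convex. By assumption each $I_j$ is nonempty, and the $I_j$ cover $[t_0,t_m]$. Because the slopes are strictly increasing, the sets are ordered: for $t<t'$ with $t\in I_{j'}$ and $t'\in I_j$, we get $j\le j'$ is impossible unless $j=j'$. This follows by comparing $(a_j-a_{j'})(t'-t)$ with the two defining inequalities $a_jt+b_j\le a_{j'}t+b_{j'}$ and $a_{j'}t'+b_{j'}\le a_jt'+b_j$. Hence $I_1,\dots,I_m$ appear from left to right.

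Consecutive sets $I_j$ and $I_{j+1}$ must share their boundary point, since the covering is by closed intervals with no gaps. At that shared point lines $j$ and $j+1$ are equal, which forces the point to be $t_j$. Because $I_j$ has nonempty interior (a line active at only a single point can be handled by a short degenerate-case remark), the strict ordering $t_0<t_1<\dots<t_{m-1}<t_m$ follows. On $[t_{j-1},t_j]=I_j$ we have $f(t)=a_jt+b_j$, which proves (a).

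Step (a) is the main obstacle. I expect the ordering argument for the $I_j$ to be the delicate part, together with the degenerate case where an active line touches $f$ only at a single point. In that degenerate case some $t_{j-1}=t_j$, and the formula still holds because the repeated vertex just duplicates a term in the maximum.
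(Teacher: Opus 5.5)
Your proof is correct, and it takes a different route from the paper mainly because the paper gives no argument at all: it only says the computation follows from \cite{boyd_convex_optimization}.

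The standard computation of this type, which also underlies the unbounded-domain version (Proposition~\ref{prop:conj_unbounded}), works from the dual side. For each $s$ it locates the maximizer of $t\mapsto st-f(t)$ by comparing $s$ with the slopes, since this function has slope $s-a_k$ on the $k$-th piece. For $s\in[a_j,a_{j+1}]$ it is nondecreasing up to $t_j$ and nonincreasing afterwards, so the supremum is attained at $t_j$. For $s\le a_1$ (resp.\ $s\ge a_m$) it is attained at $t_0$ (resp.\ $t_m$). That route gives more than the statement: it exhibits $f^*$ explicitly as a piecewise affine function with slopes $t_j$ and breakpoints $a_j$. This is the structure the paper's subsequent algorithm (summing the $\alpha_i f_i^*$ and conjugating back) actually relies on.

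Your route is more elementary. You only need that $st-f(t)$ is affine on each $[t_{j-1},t_j]$ and that an affine function on an interval peaks at an endpoint. The price is that you must establish the breakpoint structure of $f$ yourself, and your activity-interval argument (convexity of each $I_j$, left-to-right ordering from the slope comparison, no gaps in the cover) does this correctly.

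Two small points:
\begin{enumerate}
\item State explicitly that $t_0\in I_1$ and $t_m\in I_m$. If $t_0\in I_k$ with $k>1$, the ordering forces $I_1=\{t_0\}$, and the case of $t_m$ is symmetric.
\item The strict ordering $t_0<t_1<\dots<t_m$ is never used, since the endpoint argument works verbatim with $t_{j-1}\le t_j$. So you can simply drop the claim that each $I_j$ has nonempty interior, which the hypothesis does not guarantee, rather than patch it with a degenerate-case remark.
\end{enumerate}
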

The computation follows from \cite{boyd_convex_optimization}. Observe that $f_i^*$ is thus again a piecewise affine function, defined over the unbounded domain $\R$. $\sum_{i=1}^n \alpha_i f_i^*$ is thus itself a piecewise affine proper function defined over $\R$. The next proposition tells us how to compute its convex conjugate, which is similar - also a result from \cite{boyd_convex_optimization}.
\begin{proposition}[Convex conjugate of a piecewise affine function with unbounded domain]\label{prop:conj_unbounded}
Let $f(t) = \max\cbr{a_{j}t + b_{j}}_{j=1}^{m}$ for $t \in \R$, such that each line constraint $(a_j,b_j)$ is active at least at one $t$, and $a_j < a_{j+1}$ for $j \in [1,m-1]$. Then, $\forall s \in [a_1,a_m]$,
\begin{equation*}
f^*(s) = \max \cbr{t_j s - f(t_j)}_{i=0}^m 
\end{equation*}
where $t_j = -\frac{b_{j+1}-b_j}{a_{j+1}-a_j} $ for $j \in [1,m-1]$, and $f^*(s) = +\infty$ for $s \not\in [a_1,a_m]$.
\end{proposition}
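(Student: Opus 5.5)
The statement to prove is Proposition~\ref{prop:conj_unbounded}: the conjugate of $f(t)=\max_j\{a_jt+b_j\}$ on all of $\R$. The plan is to work directly from the definition $f^*(s)=\sup_{t\in\R}\,st-f(t)$, using the structure of a finite max of affine functions with strictly increasing slopes. Set $t_j=-\frac{b_{j+1}-b_j}{a_{j+1}-a_j}$ for $j\in\Iintv{1,m-1}$, the abscissa where pieces $j$ and $j+1$ cross. Because every piece is active somewhere and the slopes increase strictly, the $t_j$ are strictly increasing. Moreover, $f$ coincides with $a_jt+b_j$ on $[t_{j-1},t_j]$, with the conventions $t_0=-\infty$ and $t_m=+\infty$ for the unbounded domain. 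I will prove this ordering and identification first, from activity of each constraint and convexity of $f$.

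\textbf{Step 1 (infinite case).} For $s>a_m$, take $t\to+\infty$. Since $f(t)=a_mt+b_m$ for $t\ge t_{m-1}$, we get $st-f(t)=(s-a_m)t-b_m\to+\infty$. Symmetrically, for $s<a_1$ let $t\to-\infty$. This gives $f^*(s)=+\infty$ outside $[a_1,a_m]$.

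\textbf{Step 2 (finite case).} For $s\in[a_1,a_m]$, the function $\phi(t)=st-f(t)$ is concave and piecewise affine, with slope $s-a_j$ on $[t_{j-1},t_j]$. Since the slopes $s-a_j$ decrease in $j$, $\phi$ is nondecreasing on the pieces with $a_j\le s$ and nonincreasing on those with $a_j\ge s$. Hence the supremum is attained at the breakpoint $t_j$ where $s-a_j\ge0\ge s-a_{j+1}$, that is $a_j\le s\le a_{j+1}$. If $s=a_1$ or $s=a_m$, $\phi$ is constant on the unbounded end piece, and the value equals the one at $t_1$ or $t_{m-1}$ respectively. So $f^*(s)=\max_{j\in\Iintv{1,m-1}}\{t_js-f(t_j)\}$. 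The max over all breakpoints is fine because each term is at most the supremum and one of them attains it.

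Plugging in $f(t_j)=a_jt_j+b_j$ also exhibits $f^*$ on $[a_1,a_m]$ as piecewise affine with breakpoints at the $a_j$. This is what the mixture computation needs when $\sum_i\alpha_if_i^*$ is conjugated back.

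\textbf{Main obstacle.} The only delicate point is the index range: the statement's $\{\cdot\}_{i=0}^m$ uses $t_0$ and $t_m$, which are $\pm\infty$ here. I would make explicit that these endpoint terms are either dropped or read as limits. Their contribution $\lim_{t\to\pm\infty}(st-f(t))$ is $-\infty$ for $s$ strictly inside $(a_1,a_m)$, and equals $-b_1$ or $-b_m$ at the endpoints $s=a_1$ or $s=a_m$, which matches the value at $t_1$ or $t_{m-1}$. Either reading therefore gives the same result. For degenerate cases ($m=1$), $f^*$ is finite only at $s=a_1$, where it equals $-b_1$, consistent with this reading.
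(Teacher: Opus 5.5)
Your proof is correct. The paper does not prove this proposition; it only attributes it, like the bounded-domain version, to the cited convex optimization text. Your computation from the definition is therefore a self-contained route rather than a reproduction of the paper's argument. The citation buys brevity. Your argument makes the mechanism explicit:
\begin{itemize}
\item for $s\notin[a_1,a_m]$, the function $st-f(t)$ diverges along an unbounded end piece;
\item for $s\in[a_1,a_m]$, the function $\phi(t)=st-f(t)$ is concave, and its slopes $s-a_j$ change sign at the breakpoint $t_j$ with $a_j\le s\le a_{j+1}$.
\end{itemize}
This also shows that $f^*$ is piecewise affine on $[a_1,a_m]$ with breakpoints at the $a_j$, which is what the back-conjugation step of the mixture algorithm uses. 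Your treatment of the index range also repairs the statement. The range $\{\cdot\}_{i=0}^m$ is inherited from the bounded-domain proposition, where $t_0$ and $t_m$ are domain endpoints. Here it only makes sense under your reading (drop the endpoint terms or interpret them as limits), and that reading also covers $m=1$.

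Two small points.

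\textbf{Strict monotonicity.} Under the hypothesis that each piece is active at \emph{at least one} point, the $t_j$ are only nondecreasing, not strictly increasing. For example, $\max\{-t,0,t\}=|t|$ has its middle piece active only at $t=0$, so $t_1=t_2=0$. Nothing breaks, since a degenerate piece contributes no interval to $\phi$.

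\textbf{The deferred identification.} Set $I_j=\{t: f(t)=a_jt+b_j\}$. Comparing lines $j$ and $j+1$ at $t\in I_j$ and at $t'\in I_{j+1}$ gives $t\le t_j\le t'$. Similarly, each $I_k$ with $k<j$ lies weakly left of $I_j$, and each $I_k$ with $k>j+1$ lies weakly right of $I_{j+1}$. These closed intervals are nonempty and cover $\R$, so consecutive ones meet exactly at $t_j$, giving $I_j=[t_{j-1},t_j]$. This is where the activity hypothesis is essential. For $f=\max\{-t,-10,t\}$, whose middle piece is never active, the consecutive crossings are $t_1=10$ and $t_2=-10$. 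The formula then gives $-10$ instead of $f^*(0)=0$.
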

These two propositions suggest the following algorithm to compute $f_m = \br{\sum_{i=1}^n \alpha_i f_i^*}^*$:
\begin{itemize}
    \item Compute $f_i^*$ using Proposition \ref{prop:conj_bounded}.
    \item Find the slopes and offsets of the piecewise affine function $f_m^* = \sum_{i=1}^n \alpha_i f_i^*$.
    \item Finally, conjugate back $f_m^*$ to find $f_m = f_m^{**}$ using Proposition \ref{prop:conj_unbounded}.
\end{itemize}

\else
   \IEEEtriggeratref{14}
\fi
%%%%%%%%%%%%%%%%%%%%%%%%%%%%%%%

\bibliographystyle{IEEEtran}
\bibliography{References}

\end{document}